\g@addto@macro\normalsize{%
  \setlength\abovedisplayskip{5pt plus 3pt minus 3pt}
  \setlength\belowdisplayskip{5pt plus 3pt minus 3pt}
  \setlength\abovedisplayshortskip{5pt plus 3pt minus 2pt}
  \setlength\belowdisplayshortskip{5pt plus 3pt minus 2pt}
}
\date{15 May 2020}
\def\dfrac#1#2{\lower0.15ex\hbox{\large$\frac{#1}{#2}$}}
\numberwithin{equation}{section}
\def\({\bigl(}
\def\){\bigr)}
\newtheorem{thm}{Theorem}
\newtheorem{corollary}{Corollary}
\newtheorem{lemma}{Lemma}
\newtheorem{obs}{Observation}
\let\epsilon=\varepsilon
\newcommand{\cA}{\mathcal{A}}                     
\newcommand{\cC}{\mathcal{C}}                     
\newcommand{\cE}{\mathcal{E}}                     
\newcommand{\cH}{\mathcal{H}}                     
\newcommand{\cI}{\mathcal{I}}                     
\newcommand{\cP}{\mathcal{P}}                     
\newcommand{\cS}{\mathcal{S}}                     
\newcommand{\IN}{\mathbb{N}}                      
\newcommand{\nb}{\mathrm{N}}                      
\newcommand{\es}{\varnothing}                    
\newcommand{\sm}{\setminus}
\let\originalleft\left
\let\originalright\right
\renewcommand{\left}{\mathopen{}\mathclose\bgroup\originalleft}
\renewcommand{\right}{\aftergroup\egroup\originalright}
\newcommand{\wlg}{without loss of generality\xspace}
\newcommand{\Gv}[1][v]{G\sm\nb[#1]}
\newcommand{\GVN}[1]{G[V_{#1}\sm\nb[v_{#1}]]}
  \definecolor{darkred}{rgb}{0.6,0.0,0.0}
  \definecolor{lightred}{rgb}{1.0 0.8 0.8}
  \definecolor{lightblue}{rgb}{0.8 0.8 1.0}
  \definecolor{darkgreen}{rgb}{0.0 0.5 0.0}
  \definecolor{lightgreen}{rgb}{0.7 1.0 0.7}
  \definecolor{indigo}{rgb}{0.3 0.0 0.5}
\tikzset{ b/.style = { circle 
                     , draw
                     , thick
                     , inner sep = 0pt
                     , fill = black
                     , minimum size = 3.5pt
                     }
        , sb/.style = { circle 
                     , draw
                     , thick
                     , inner sep = 0pt
                     , fill = black
                     , minimum size = 2pt
                     }
        , w/.style = { circle 
                     , draw
                     , thick
                     , inner sep = 0pt
                     , fill = white
                     , minimum size = 3.7pt
                     }
        , sw/.style = { circle 
                     , draw
                     , thick
                     , inner sep = 0pt
                     , fill = white
                     , minimum size = 2pt
                     }
        , g/.style = { circle 
                     , draw
                     , thick
                     , inner sep = 0pt
                     , fill = lightgray
                     , minimum size = 4.0pt
                     }
        , i/.style = { circle 
                     , thick
                     , inner sep = 0pt
                     , minimum size = 4.0pt
                     }
        , R/.style = { circle 
                     , draw
                     , thick
                     , inner sep = 0pt
                     , fill = lightred
                     , minimum size = 5.5mm
                     }
        , B/.style = { circle 
                     , draw
                     , thick
                     , inner sep = 0pt
                     , fill = lightblue
                     , minimum size = 5.5mm
                     }
        , G/.style = { circle 
                     , draw
                     , thick
                     , inner sep = 0pt
                     , fill = lightgreen
                     , minimum size = 5.5mm
                     }
        , U/.style = { circle 
                     , draw
                     , thick
                     , inner sep = 0pt
                     , minimum size = 5.5mm
                     }
        }
\title{Counting weighted independent\\ sets beyond the permanent}
\author{
Martin Dyer\thanks{Work supported by
    EPSRC grants EP/S016562/1 and EP/S016694/1,``Sampling in hereditary classes''.}\\
\small School of Computing\\[-0.5ex]
\small University of Leeds\\[-0.5ex]
\small Leeds LS2~9JT, UK\\[-0.5ex]
\small\texttt{m.e.dyer@leeds.ac.uk}\\
\and Mark Jerrum\footnotemark[1]\\
\small School of Mathematical Sciences\\[-0.5ex]
\small Queen Mary University of London\\[-0.5ex]
\small Mile End Road, London E1 4NS, UK\\[-0.5ex]
\small\texttt{m.jerrum@qmul.ac.uk}
\and Haiko M\"{u}ller\footnotemark[1]\\
\small School of Computing\\[-0.5ex]
\small University of Leeds\\[-0.5ex]
\small Leeds LS2~9JT, UK\\[-0.5ex]
\small\texttt{h.muller@leeds.ac.uk}
\and Kristina Vu\v{s}kovi\'{c}\\
\small School of Computing\\[-0.5ex]
\small University of Leeds\\[-0.5ex]
\small Leeds LS2~9JT, UK\\[-0.5ex]
\small\texttt{k.vuskovic@leeds.ac.uk}
}
\begin{document}

\maketitle

\begin{abstract}
Jerrum, Sinclair and Vigoda (2004) showed that the permanent of any square matrix can be estimated in polynomial time.
This computation can be viewed as approximating the partition function of edge-weighted matchings in a bipartite graph.
Equivalently, this may be viewed as approximating the partition function of vertex-weighted independent
sets in the line graph of a bipartite graph.\\
\indent Line graphs of bipartite graphs are perfect graphs, and are known to be precisely the class of
(claw,\,diamond,\,odd hole)-free graphs.
So how far does the result of Jerrum, Sinclair and Vigoda extend? We first show that it extends to
(claw,\,odd hole)-free graphs, and then show that it extends to the even larger class of (fork,\,odd hole)-free graphs.
Our techniques are based on graph decompositions, which have been the focus of much recent work
in structural graph theory, and on structural results of  Chv\'atal and Sbihi (1988), Maffray and Reed (1999)
and Lozin and Milani\v{c} (2008).
\end{abstract}

\section{Introduction}\label{intro}

Independent sets are central objects of study in graph theory.\footnote{Definitions not given here appear in~Section~\ref{sec:prelim} below.}
 In general, finding a largest independent set is a very hard problem.
Indeed, it is known to be hard to approximate the size of this set within a ratio $n^{1-\epsilon}$ for graphs on $n$ vertices and any $\epsilon>0$, unless $\mathsf{P}=\mathsf{NP}$. This has led to an emphasis on studying this problem in particular classes of graphs.

In particular, there has been a focus on \emph{hereditary} classes, that is, classes that are closed under vertex deletions. Equivalently, such a class can be defined by a (not necessarily finite) set of \emph{forbidden (induced) subgraphs}. 

\emph{Matchings} are a particular case. A matching in a graph $G=(V,E)$ is an independent set in the \emph{line graph} of the root graph $G$. (``Graph'' will mean ``simple undirected graph'', unless otherwise stated.) Edmonds~\cite{edmonds} showed that a maximum weighted matching in any graph can be found in polynomial time. Beineke~\cite{beineke} showed that line graphs can be characterised by nine forbidden subgraphs. Of these, the \emph{claw} (see Fig.~\ref{fig:clawdiagem}) seems the most important for algorithmic questions. Thus Minty~\cite{Minty} extended Edmonds' algorithm to the  larger class of \emph{claw-free} graphs. 

Claw-free graphs have been studied extensively by several authors, including Chudnovsky and Seymour in a long sequence of papers culminating in~\cite{ChuSey}. These papers give a decomposition of claw-free graphs, which unfortunately is non-algorithmic. However, this has been simplified and extended in \cite{FOS,NobSas} to give an efficient decomposition which supports finding a maximum weighted independent set.

In this paper we are concerned with counting problems, and for these it is important to distinguish weighted and unweighted (or unary weighted) variants, even more so than with optimisation problems.  For example, there is an efficient approximation algorithm for counting unweighted matchings in a general graph, but the existence of an approximation algorithm for counting weighted matchings remains an open question.  It is weighted counting problems that we focus on here. Hereditary classes are particularly suitable for counting problems, since they are self-reducible by vertex deletion.

Since claw-free graphs include line graphs, the \textsf{\#P}-completeness result of Valiant~\cite{Valiant} for matchings implies that \emph{exact counting} of independent sets in polynomial time is unlikely. Even polynomial time \emph{approximate} counting remains an open question for general line graphs in the weighted setting.  However, building on an earlier pseudopolynomial algorithm of Jerrum and Sinclair~\cite{JS}, Jerrum, Sinclair and Vigoda~\cite{JSV} made an important breakthrough in approximate counting. They showed that approximately counting weighted \emph{perfect matchings} in a \emph{bipartite} graph is in polynomial time, the \emph{permanent} approximation problem.

Our goal is to extend the result of~\cite{JSV} to larger classes of graphs. It might be expected that
the right direction for this would be to matchings in general graphs but, as noted above, this remains an open problem, and a positive solution seems increasingly unlikely.
An important requirement of the proof of~\cite{JSV} is that the graph should have no odd cycles, which places it precisely in the class of bipartite graphs. Indeed, \v{S}tefankovi\v{c}, Vigoda and Wilmes~\cite{SVW} have given
a family of nonbipartite graphs for which the algorithm of~\cite{JSV} does not run in polynomial time. Interestingly, from the viewpoint of this paper, they also show that weighted matchings in these graphs, and in a more general class of graphs that are ``close to bipartite'', can be counted in polynomial time, using the algorithm of~\cite{JSV} with a graph decomposition technique. In~\cite{SVW}, this is the Gallai-Edmonds decomposition.

Here we take a different direction to generalise~\cite{JSV}, regarding approximating the permanent as the problem of approximately counting weighted independent sets in line graphs of bipartite graphs. We show that these two problems are  polynomial time equivalent. That approximating the permanent is reducible to approximately counting weighted independent sets is shown in Section~\ref{sec:claw W_k}, and that counting arbitrarily weighted independent sets in line graphs of bipartite graphs is reducible to approximating the permanent is shown in Section~\ref{sec:permanent}.

An important property of line graphs of bipartite graphs is that they are \emph{perfect}. So it might be hoped that the appropriate generalisation of the result of~\cite{JSV} would be to counting independent sets in perfect graphs. That this class can be recognised in polynomial time was shown by Chudnovsky, Cornu{\'{e}}jols, Liu, Seymour and Vu\v{s}kovi\'{c}~\cite{ChCLSV}. The maximum independent set in a perfect graph can also be found in polynomial time, using a convex optimisation algorithm of  Gr\"otschel, Lov\'asz, and Schrijver~\cite{GLS}, though no combinatorial algorithm is yet known for this problem.

However, approximately counting independent sets in perfect graphs appears intractable in general. Bipartite graphs are perfect, but approximately counting independent sets in bipartite graphs defines the complexity class \textsf{\#BIS}. This class was introduced by Dyer, Goldberg, Greenhill and Jerrum~\cite{DyGoGJ}, and hardness for the class has since been used as evidence for the intractability of various approximate counting problems.

Trotter~\cite{Trot} suggested the smaller class of \emph{line-perfect} graphs. These are graphs whose line graph is perfect. Trotter showed that a graph is line-perfect if and only if it contains no odd cycle of size larger than three. Independent sets in the line graph (matchings in the root graph) appear a natural target for generalising~\cite{JSV} but, in fact, they are  a proper subclass of those that we will consider here.

Line graphs of bipartite graphs have a simple set of forbidden subgraphs \cite{HaHo74}.
These are the claw, all odd holes and the \emph{diamond} (see Fig.~\ref{fig:clawdiagem}). See Maffray and Reed~\cite[Thm.\,4]{MaRe}, who also gave the corresponding result~\cite[Thm.\,5]{MaRe} for line graphs of bipartite \emph{multigraphs}. These have the claw, \emph{gem}, \emph{4-wheel} and
odd holes (see Fig.~\ref{fig:clawdiagem}) as forbidden subgraphs. Of these, excluding the claw
and the odd holes appears important in extending the algorithm of~\cite{JSV}.
This results from the ``canonical paths'' argument used in its proof.
However, the diamond, gem or 4-wheel do not appear important in this respect.

Here we establish this claim. We extend the result of~\cite{JSV} to the class of graphs which excludes only claws and odd holes. This is essentially the class of \emph{claw-free perfect} graphs. (See Section~\ref{structure} below.) These form a main focus of this paper, and we show that the algorithm of~\cite{JSV} can be extended to approximate the total weight of independent sets for graphs in this class.
The structure of graphs in this class was characterised by Chv\'atal and Sbihi~\cite{ChSb} and Maffray and Reed~\cite{MaRe}. They gave a polynomial time \emph{decomposition} algorithm that splits the graph into simpler parts. We use their results to show that the algorithm of~\cite{JSV} can be applied directly to count weighted independent sets in (claw,\,odd hole)-free graphs, a slightly larger class than
claw-free perfect graphs. Since line graphs of bipartite graphs are a proper subclass of (claw,\,odd hole)-graphs,
this is a natural generalisation of the  result of~\cite{JSV}.

Chudnovsky and Plumettaz~\cite{ChPl} have given a different decomposition of claw-free perfect graphs, which has the additional property of \emph{composability}. That is, the rules used to decompose the graph can be applied in reverse to create precisely the graphs in the class (and no more). Unfortunately, this results in a considerably more complex decomposition than that of~\cite{ChSb,MaRe}. We make no use of this here, since we do not need composability. Moreover,~\cite{ChPl} does not give a polynomial time algorithm for its decomposition. The ideas in~\cite{FOS,NobSas} would give a polynomial time decomposition, but it is unclear whether this supports counting.

In Section~\ref{sec:clawoddholefree} we will develop a polynomial time algorithm for
approximately counting all weighted independent sets in a (claw,\,odd hole)-free graph,
generalising the algorithm of~\cite{JSV}.

Observe that the algorithm of \cite{DyGrMu} runs in polynomial time, and counts all weighted independent sets
in an arbitrary claw-free graph $G=(V,E)$ with  unary weights. (See also Matthews~\cite{Matthe08}.)
This generalises Jerrum and Sinclair's matching algorithm~\cite{JS}.

So what do we achieve by restricting to (claw,\,odd hole)-free graphs? The gain is that our algorithm is genuinely polynomial time, whereas that of \cite{DyGrMu} is only pseudopolynomial. In particular, this allows us to approximate the total weight of independent sets of any given size $k$. We can estimate the total weight of \emph{maximum} independent sets, which corresponds to counting maximum matchings in the root graph of a line graph.
We then further relax the conditions on the class. We cannot relax first the odd hole condition, since this would take us into more general claw-free graphs, and might require counting matchings in general graphs. Therefore we consider relaxing the claw-free condition.

Lozin and Milani\v{c}~\cite{LozMil} described a polynomial time algorithm for finding a maximum weight\-ed independent set in a \emph{fork-free} graph. That is, a graph with only the \emph{fork} (see Fig.~\ref{fig:clawdiagem}) as a forbidden subgraph. Clearly, this is a proper superclass of claw-free graphs, since the claw is a subgraph of the fork. In Section~\ref{sec:forkcount}, we show how our methods can be combined with ideas of~\cite{LozMil} to count arbitrarily weighted independent sets in (fork,\,odd hole)-free graphs. Again, these are a proper superclass of (claw,\,odd hole)-free graphs. This gives a further nontrivial generalisation of the result of~\cite{JSV}.
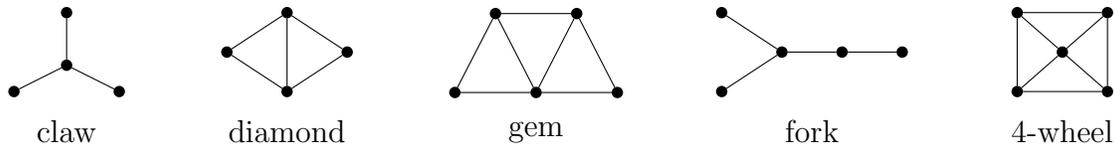
\begin{figure}
\begin{center}
  \begin{tikzpicture}[scale=0.35]
    \node[b] (v4) at (2,0) {};
    \node[b] (vv) at (4,1) {};
    \node[b] (v0) at (4,3) {};
    \node[b] (v2) at (6,0) {};
    \draw (v4)--(vv)--(v0)  (vv)--(v2);
    \node at (4,-1.5) {claw};
  \end{tikzpicture}\hspace{1.1cm}
  \begin{tikzpicture}[xscale=0.4,yscale=0.35]
    \node[b] (v1) at (0,1.5) {};
    \node[b] (v2) at (2,3) {};
    \node[b] (v3) at (4,1.5) {};
    \node[b] (v4) at (2,0) {};
    \draw (v1)--(v2)--(v3)--(v4)--(v1) (v2)--(v4) ;
    \node at (2,-1.5) {diamond};
  \end{tikzpicture}\hspace{1.1cm}
  \begin{tikzpicture}[xscale=0.27,yscale=0.35]
    \node[b] (v1) at (0,0) {};
    \node[b] (v2) at (2,3) {};
    \node[b] (v3) at (4,0) {};
    \node[b] (v4) at (6,3) {};
    \node[b] (v5) at (8,0) {};
    \draw (v1)--(v2)--(v3)--(v4)--(v5) (v1)--(v3)--(v5) (v2)--(v4) ;
    \node at (4,-1.5) {gem};
  \end{tikzpicture}\hspace{1.1cm}
  \begin{tikzpicture}[xscale=0.4,yscale=0.35]
    \node[b] (v1) at (0,3) {};
    \node[b] (v2) at (2,1.5) {};
    \node[b] (v3) at (0,0) {};
    \node[b] (v4) at (4,1.5) {};
    \node[b] (v5) at (6,1.5) {};
    \draw (v1)--(v2)--(v3) (v2)--(v4)--(v5) ;
    \node at (3,-1.5) {fork};
  \end{tikzpicture}\hspace{1.1cm}
  \begin{tikzpicture}[xscale=0.4,yscale=0.35]
    \node[b] (v1) at (0,3) {};
    \node[b] (v2) at (3,3) {};
    \node[b] (v3) at (3,0) {};
    \node[b] (v4) at (0,0) {};
    \node[b] (v5) at (1.5,1.5) {};
    \draw (v1)--(v2)--(v3)--(v4)--(v1) (v1)--(v5)--(v2) (v3)--(v5)--(v4) ;
    \node at (1.5,-1.5) {4-wheel};
  \end{tikzpicture}
\end{center}\caption{Claw, diamond, gem, fork and 4-wheel}\label{fig:clawdiagem}
\end{figure}

\subsection{Preliminaries}\label{sec:prelim}

Let $\IN=\{1,2,\ldots\}$ denote the natural numbers, and $\IN_0=\IN\cup\{0\}$. If $n\in\IN$, let $[n]=\{1,2,\ldots,n\}$.
For a set $S$, $S^{(2)}$ will denote the set of subsets of $S$ of size exactly 2. 

Throughout this paper, graphs are always simple and undirected. Let $G=(V,E)$ be a graph. 
 We denote its vertex set by $V(G)$, and its edge set by $E(G)\subseteq V^{(2)}$. We write an edge $e\in E$ between $v$ and $w$ in $G$ as $e=vw$, or $e=\{v,w\}$ if the $vw$ notation is ambiguous. If $U,W\subseteq V$ with $U\cap W=\es$, we will denote the $U,W$ \emph{cut} by $(U,W)=\{uw \in E : u\in U,w\in W\}$. We also consider \emph{multigraphs}, in which $E$ may have \emph{parallel edges}. That is, $E$ is a \emph{multiset} with elements in $V^{(2)}$.

For a graph $G=(V,E)$, we will write $n=|V|$ and $m=|E|$, unless stated otherwise. The \emph{empty graph} $G=(\es,\es)$ is the unique graph with $n=0$.  Also, $G=(V,V^{(2)})$, is the complete graph on $n$ vertices. The \emph{complement} of any graph $G=(V,E)$ is $\overline{G}=(V,V^{(2)} \sm E)$.

The \emph{neighbourhood} of $v\in V$ will be denoted $\nb(v)$, and $\nb[v]=\nb(v)\cup\{v\}$.
Then the \emph{degree} $\deg(v)$ of $v$ is $|\nb(v)|$.
More generally, the neighbourhood of a set $U\subseteq V$, is $\nb(U)=\{v\in V\sm U: uv\in E\textrm{ for some }u\in U\}$, and $\nb[U]= U\cup\nb(U)$.
We will say that a vertex $v\in V \sm U$ is \emph{complete} to $U$ if $U\subseteq \nb(v)$, and
\emph{anticomplete} if $U\cap\nb(v)=\es$. More generally, a set $U\subseteq V$ is complete to a set $W\subseteq V \sm U$ if every $u\in U$ is adjacent to every $w \in W$, and anticomplete if $(U,W)=\es$. Observe that this is a symmetric relation between $U$ and $W$. The graph $G=(V,E)$ is \emph{connected} if $V$ cannot be partitioned into sets $U,W$ that are anticomplete.

The term ``induced'' subgraph will always mean a vertex-induced subgraph.
If $U\subseteq V$, we will write $G[U]$ for the subgraph of $G$ induced by $U$.
Where ``subgraph'' is used without qualification, it will always mean induced subgraph.
Then a class $\cC$ of graphs is called \emph{hereditary} if $G[U]\in\cC$ for all $G\in\cC$ and $U\subseteq V$.
If $U\subseteq V$, we will often write $G\sm U$ as shorthand for $G[V\sm U]$.

We say a graph $G$ \emph{contains} a graph $H$ if it has an induced subgraph isomorphic to $H$,
and  $H$ is a \emph{forbidden subgraph} for the graph class $\cC$ if no graph in $\cC$ contains $H$.
It is easy to see that any hereditary class can be characterised by a (possibly infinite)
set $\cH$ of minimal forbidden subgraphs. In this case we refer to $\cC$ as the class of $\cH$-free graphs.

An \emph{odd hole} in a graph $G$ is a subset $H \subseteq V$, with $|H|\geq 5$ and odd, such that $G[H]$ is a simple cycle. A \emph{perfect} graph $G$ is such that neither $G$ nor its complement $\overline{G}$ contains an odd hole. A hole in $\overline{G}$ is called an \emph{antihole} in $G$. Perfect graphs were originally defined differently, but the equivalence to this definition was proved in~\cite{ChRoST}.

The \emph{line graph} $L(G)=(E,\cE)$ of a multigraph $G=(V,E)$ has $\cE=\big\{ \{xy,yz\}: xy,yz\in E\big\}$.
We will write $G=L^{-1}(G')$ for the inverse operation, when it is defined, and call $G$ the \emph{root} multigraph of the line graph $G'$. Note that $G=L^{-1}(G')$ is not unique when $G$ is a multigraph, whereas it is unique for $|V|>4$ when $G$ is a graph. When $L^{-1}(G')$ is not uniquely defined, we may choose it to be any multigraph $G$ such that $G'=L(G)$.

A set $S \subseteq V$ is \emph{independent} (or \emph{stable}) in $G$ if
$G[S]$ is edgeless. The empty set $\es$ is an independent set in every graph.
By $\cI(G)$ we denote the set of all independent sets of
$G$, and $\cI_k(G) = \cI(G) \cap V^{(k)}$ for $k \in \IN_0$.
The largest $k$ for which $\cI_k(G)\neq\es$ is the \emph{independence number}
$\alpha(G)$ of $G$.

For further information on graph theory, see~\cite{survey,diestel}, for example.

We will suppose that the vertices $v \in V(G)$ are equipped with non-negative \emph{weights} $w(v)\in\mathbb{R}$.
We will denote such a vertex-weighted graph by $(G,w)$, or simply $G$ when the vertex weights $w$ are understood. Two weighted graphs $(G_1,w_1)$, $(G_2,w_2)$ will be called \emph{isomorphic} if $G_1,G_2$ are isomorphic as graphs, though we may have $w_1\neq w_2$.
Since we are considering only approximation, we may assume here that $w(v)\in\mathbb{Q}$.
The weight of a subset $S$ of $V$ is then defined to be $w(S)=\prod_{v \in S} w(v)$.\footnote{Note the difference from the corresponding definition $\sum_{v \in S} w(v)$ used in optimisation.}
Then let $W_k(G) = \sum_{S \in \cI_k(G)} w(S)$,
and $W(G) = \sum_{S \in \cI(G)} w(S)=\sum_{k=0}^{\alpha(G)} W_k(G)$.
In particular, we have $W_0(G)=1$, $W_1(G)=\sum_{v \in V} w(v)$ and $W_2(G)=\sum_{uv \notin E} w(u)w(v)$.

We will use only the following simple properties of $W(G)$.
If $G$ has connected components $C_1,C_2,\ldots,C_r$, then $W(G)=\prod_{i=1}^{r}W(C_i)$,
and if $\cS_1,\cS_2,\ldots,\cS_s$ partitions $\cI(G)$, then $W(G)=\sum_{i=1}^{s} \sum_{I\in\cS_i}w(I)$.

We will say that a vertex-weighted graph $(G',w')$ is \emph{equivalent} to a vertex-weighted graph $(G,w)$ if $W_k(G)=W_k(G')$, for all $k\in\IN_0$, and hence $W(G)=W(G')$. In particular, this implies $\alpha(G)=\alpha(G')$. Observe that equivalent weighted graphs are not necessarily isomorphic,
and isomorphic weighted graphs are not necessarily equivalent.

Note that, if $w(v)=0$ for any $v\in V$, then $G$ is equivalent to $G[V\sm\{v\}]$, thus we can consider such vertices as present in or deleted from $G$, whichever is more convenient. We will assume that such vertices are deleted before carrying out computations, so we may assume that $w(v)>0$ for all $v\in V$.

If $w(v)=1$ for all $v\in V$, then $W_k(G)=|\cI_k(G)|$, the number of independent sets of size $k$ in $G$,
and $W(G)=|\cI(G)|$ counts all independent sets in $G$. However, we also refer to the case with non-unit weights
as ``counting''.

A central theme of structural graph theory has been \emph{decomposition}, that is, breaking a graph into smaller
pieces that have stronger properties than the original, such that the pieces are all connected to each other
in some canonical fashion.
Our counting algorithms  for (claw,\,odd hole)-free and (fork,\,odd hole)-free graphs are based on two graph
decompositions, clique cutset decomposition and modular decomposition, respectively.
We will describe these in Section~\ref{sec:clawoddholefree} and Section~\ref{sec:forkcount} respectively.

We consider approximating $W(G)$ and $W_k(G)$ in the following sense. An FPRAS (fully polynomial randomized
approximation scheme) is an algorithm which produces an estimate $\widehat{W}$ of a quantity $W$ such that
\[ \Pr\big((1-\varepsilon)W \leq \widehat{W} \leq (1+\varepsilon)W \big) \, \geq \, \nicefrac34\,. \]
The key FPRAS we employ here is that of Jerrum, Sinclair and Vigoda~\cite{JSV} for the \emph{permanent}.
This uses the Markov chain approach to approximate counting, but we will not need the interior details of the algorithm.
Essentially, we use~\cite{JSV} as a ``black box'' here. We note the equivalence of approximate counting
with approximate random generation~\cite{JVV}, but we make no direct use of this here.

For further information on approximate counting, see~\cite{jerrumbook}, for example.

\section{Approximating $W(G)$ in (claw,\,odd hole)-free graphs}\label{sec:clawoddholefree}
We develop an algorithm for approximating $W(G)$ in (claw,\,odd hole)-free graph using clique cutset decomposition.
But first we describe this method, and its application to counting, in a general setting.

\subsection{Clique cutset decomposition}\label{sec:cliques}

A \emph{clique} $K$ in a graph $G=(V,E)$ is a subset of $V$ such that $G[K]$
is a complete graph.
$K \subseteq V$ is a \emph{clique cutset} of $G$ if $K$ is a clique and $G \sm K$ is disconnected. In particular, $\es$ is a clique cutset of every disconnected graph. For a clique cutset $K$ of $G$, let $(A,B)$ be a partition of $V \sm K$ such that $A$ is anticomplete to $B$. The subgraphs $G[A \cup K]$ and $G[B \cup K]$ are \emph{blocks} of the decomposition of $G$ by $K$.
These blocks may themselves have clique cutsets, so may contain further blocks. A block with no clique cutset is called an \emph{atom}. The decomposition can be presented in the form of a tree, in which the interior vertices are cliques, and the leaves are atoms. Tarjan~\cite{Tarjan} gave $O(mn)$ algorithm for a particular tree representation. This gives a binary decomposition tree in which all the interior nodes (cliques) form a path. If the tree has height $h$, we will number the atoms $A_0,A_1,\ldots,A_h$
and cliques $K_1,K_2,\ldots,K_h$ from the bottom up in this tree. While the atoms are all different, a clique can occur several times in a decomposition tree. See Fig.~\ref{fig:cliquedecomposition}, and see~\cite{Tarjan} for further information. We describe how this decomposition may be used for computing $W(G)$
in Section~\ref{sec:cutsetcounting}.

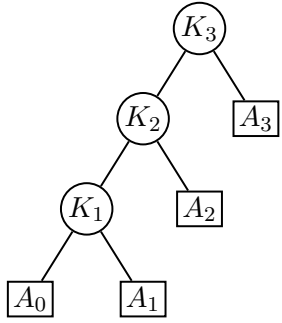
\begin{figure}
\begin{center}
\begin{tikzpicture}[line width=0.75pt,minimum size=0mm,xscale=1.5,yscale=1.2,font=\small]
\draw (0,0) node[rectangle,draw,inner sep=2pt] (A0) {$A_0$} (1,0) node[rectangle,draw,inner sep=2pt] (A1) {$A_1$}
(0.5,1) node[circle,draw,inner sep=1pt] (C1) {$K_1$} (1.5,1) node[rectangle,draw,inner sep=2pt] (A2) {$A_2$}
(1,2) node[circle,draw,inner sep=1pt] (C2) {$K_2$} (2,2) node[rectangle,draw,inner sep=2pt] (A3) {$A_3$}
(1.5,3) node[circle,draw,inner sep=1pt] (C3) {$K_3$} ;
\draw (A0)--(C1)--(C2)--(C3) (C1)--(A1) (C2)--(A2) (C3)--(A3) ;
\end{tikzpicture}
\end{center}
\caption{Clique decomposition tree}\label{fig:cliquedecomposition}
\end{figure}

\subsection{Approximating $W(G)$ using clique cutset decomposition}\label{sec:cutsetcounting}
Let $\cC$ be a hereditary class of graphs such that all graphs in $\cC$ have a clique cutset decomposition
with all atoms in some hereditary class $\cA\subset\cC$, and we can approximate $W(G)$ for any weighted
$G=(V,E)$ in $\cA$ in time $T_{\cA}(n)=\Omega(n)$, where $T$ is assumed convex.  We show how to determine $W(G)$ for the entire graph $G$
in time  $T_{\cC}(n)\leq 2nT_{\cA}(n)$.

The decomposition tree in Section~\ref{sec:cliques} has cliques $K_1,K_2,\dots,K_h$ and atoms $A_0,A_1,\ldots,A_h$, where $h\leq n$. The root of the tree is $K_h$. Let $A'_i=A_i\sm K_i$, $s_i=|K_i|$, $a'_i=|V(A'_i)|$. Let $G_i$ be the graph formed by deleting the vertices of $A'_h,A'_{h-1},\dots, A'_{i+1}$ from $G$.
Thus $G_h=G$, and $G_{i-1}=G_i\sm V(A'_i)$. We will account for the independent sets intersecting $A'_i$ by revising
the weights on the vertices in $K_i$, in a similar way to Tarjan's~\cite{Tarjan} approach to the maximum weight
independent set problem.

Since $K_i$ is a clique cutset in $G_i$, we may partition $\cI(G_i)$ by the value of $I\cap K_i$,
which is either $\{v\}$ $(v\in K_i)$ or $\es$. Then we partition $G_i$ into three vertex-disjoint
subgraphs $G_{i-1}\sm K_i$, $G[K_i]$, and $A'_i$. So we may write
\begin{align*}
 W(G_i)&= W(G_{i-1}\sm K_i)\,W(A'_i) + \sum_{v\in K_i}W(G_{i-1}\sm\nb[v])\,w(v)\,W(A'_i\sm \nb(v))\\
 &=W(A'_i)\,\Big(W(G_{i-1}\sm K_i) + \sum_{v\in K_i}W(G_{i-1}\sm\nb[v])\,w(v)\,W(A'_i\sm \nb(v))/W(A'_i)
 \Big) \\
 &= W(A'_i)\,W(G_{i-1})\;,
 \intertext{where the vertex weights in $G_{i-1}$ relate to those in $G_i$ by}
 w(v)& \gets w(v)\,W(A'_i\sm \nb(v))/W(A'_i)\quad (v\in K_i),\qquad w(v) \gets w(v)\quad (v\notin K_i)\;.
\end{align*}
Thus, since $G_h=G$, we may compute $W(G)$ by induction as
\[ W(G)=W(A_0)\prod_{i=1}^hW(A'_i)\;,\]
where we update the vertex weights as above at each stage. Note that the weights of some vertices may change several times in this process, since the cliques of Tarjan's decomposition are not necessarily vertex-disjoint.

At stage $i$, we have to perform $s_i+1$ computations on subgraphs of $A'_i$, which are all in $\cA$. Thus the total time is
$T_{\cC}(n)=\sum_{i=1}^h (s_i+1)T_{\cA}(a'_i) \leq 2nT_{\cA}(n)$, since $h\leq n$, $s_i\leq n$, $\sum_{i=1}^h a'_i\leq n$
and $T$ is convex.

Note that this analysis deals only with applications of the algorithm for $\cA$. It ignores the effect of the bit-size of the vertex weights on $T_{\cA}(n)$. This distinction is not so important for optimisation, but is much more important for counting, since contracting modules (see \ref{sec:moddecomp}) can cause exponential growth in the weights. The same comment applies to the algorithm of Section~\ref{sec:forkcount}. However, we do not pursue this issue further in this paper.

\subsubsection{Error Analysis}\label{sec:clawerror}
We are only approximating the weight of graphs in $\cA$, so we must show that the resulting error in $W(G)$ can be controlled for $G\in\cC$.

Suppose we approximate to a factor $(1\pm\varepsilon/n^2)$ throughout.
Then, by induction, the weights in $A_{h-i}$ will have relative error at most $(1\pm\varepsilon/n^2)^i$. Thus the estimate of the total weight of $A_0$ will have relative error at most $(1\pm\varepsilon/n^2)^h$. The error in $W(A'_{h-i})$ will be at most $(1\pm\varepsilon/n^2)^i$, so the error in $\prod_{i=1}^hW(A'_i)$ is at most $(1\pm\varepsilon/n^2)^{h(h-1)/2}$. Hence the error in $W(G)$ is at most $(1\pm\varepsilon/n^2)^{h(h+1)/2}$. Since $h < n-1$, the error is a most $(1\pm\varepsilon/n^2)^{n^2/2}$, which is at most $(1\pm\varepsilon)$ for $\varepsilon<1$.
So the overall error can be kept within any desired relative error $\varepsilon$ by performing the weight estimations for all  graphs in $\cA$ to within error $\varepsilon/n^2$.

\subsection{Structure of claw-free perfect graphs}\label{structure}

In our application of the method of~Section~\ref{sec:cutsetcounting}, $\cC$ will be
the class of (claw,\,odd hole)-free graphs, and $\cA$ will be a class of graphs
that we will define below. We must examine approximate counting in this class,
but first we review the structural results which allow us to apply clique cutset decomposition.

Chv\'atal and Sbihi~\cite{ChSb} investigated the structure of claw-free \emph{perfect} graphs as a special class of perfect graphs. These are closely related to (claw,\,odd hole)-free graphs. The difference is that odd antiholes are also forbidden. The following lemma of Ben Rebea explains that relation.
\begin{lemma}[Ben Rebea] \label{lem:benrebea}
Let $G$ be a connected claw-free graph with
$\alpha(G)\geq3$. If $G$ contains an odd antihole then it contains a hole of length five.\qed
\end{lemma}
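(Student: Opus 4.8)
The plan is to prove the contrapositive-flavoured structural statement directly: assume $G$ is a connected claw-free graph with $\alpha(G)\geq 3$ containing an odd antihole, and extract a $C_5$. The key observation is that the smallest odd antihole is itself $C_5$ (the complement of $C_5$ is $C_5$), so if the antihole in $G$ has length exactly five we are immediately done. Thus the substantive case is an odd antihole of length $2k+1\geq 7$, and the whole argument reduces to showing that such a large antihole cannot persist in a claw-free graph with independence number at least $3$ without forcing a $C_5$ to appear as an induced subgraph somewhere.

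First I would set up notation for the antihole. Write the vertices of the odd antihole as $\overline{C}=\{u_0,u_1,\dots,u_{2k}\}$, indices mod $2k+1$, where in $G$ the vertex $u_i$ is adjacent to every $u_j$ \emph{except} $u_{i\pm 1}$; equivalently the non-edges of $\overline{C}$ are precisely the cycle $u_0u_1\cdots u_{2k}u_0$. The idea is to locate a small induced subgraph on a few consecutive antihole vertices. Consider three vertices whose indices are mutually non-consecutive, say $u_0,u_2,u_4$ for $k\geq 3$: these form a triangle in $G$. The claw-free hypothesis will be used to control how other vertices attach to such triangles, and $\alpha(G)\geq 3$ guarantees an independent set of size three to play against the clique structure of the antihole (note $\alpha(\overline{C})=2$ for an antihole, so the three pairwise-nonadjacent vertices witnessing $\alpha(G)\geq 3$ cannot all lie in $\overline{C}$).

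The main technical step, and the one I expect to be the \textbf{main obstacle}, is the interaction between the large antihole and a vertex realizing the extra independence. Because the antihole has independence number $2$, a size-$3$ independent set $\{x,y,z\}$ in $G$ meets $\overline{C}$ in at most two vertices, so at least one of $x,y,z$ lies outside $\overline{C}$ and is \emph{anticomplete} to at least one of the others. I would analyze how such an external non-neighbour attaches to the antihole: a vertex $v\notin\overline{C}$ that is adjacent to some $u_i$ but not to its antihole-neighbours $u_{i\pm 1}$ would create, together with the three consecutive-index vertices $u_{i-1},u_i,u_{i+1}$ (which form a path in $G$ via the non-adjacency pattern) a claw centred at $u_i$ unless forced adjacencies appear; tracking these forced adjacencies around the cycle of non-edges is what manufactures the desired $C_5$. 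The delicate part is ruling out the configurations where the forced adjacencies propagate consistently all the way around a length-$\geq 7$ antihole without ever closing up into an induced five-cycle — this is exactly where the parity (oddness) of the antihole and the bound $k\geq 3$ are both needed.

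Finally, I would assemble the pieces: having located five vertices (some from $\overline{C}$, possibly one external witness from the independent set) whose induced non-edges form a single $5$-cycle, I invoke the complementation observation once more to conclude that these five vertices induce a $C_5$ in $G$. Since $C_5$ is self-complementary, an induced $5$-antihole \emph{is} an induced $5$-hole, so producing a short antihole suffices; the bulk of the work is therefore the reduction, via claw-freeness and $\alpha(G)\geq 3$, from a long antihole down to a five-vertex antihole. I would expect the write-up to proceed by a short case analysis on whether the three independent vertices lie inside or outside $\overline{C}$, with the external-vertex case carrying the weight of the proof.
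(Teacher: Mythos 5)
First, note that the paper does not prove this lemma at all: it is stated as a known result of Ben Rebea with the proof omitted (it appears in Chv\'atal--Sbihi \cite{ChSb}), so there is no in-paper argument to compare yours against. Judged on its own, your proposal correctly disposes of the two easy observations---a length-five antihole is itself a $C_5$, and a stable set of size three meets the antihole in at most two vertices, so an external witness exists---but the entire substantive content of the lemma is deferred to a step that is both unspecified and, as written, incorrect. In an antihole $\overline{C}=\{u_0,\dots,u_{2k}\}$ the vertex $u_i$ is \emph{non}-adjacent to $u_{i\pm1}$ in $G$, so $\{u_{i-1},u_i,u_{i+1}\}$ induces the single edge $u_{i-1}u_{i+1}$ plus the isolated vertex $u_i$ (the path you describe lives in $\overline{G}$, not in $G$), and the configuration you propose---$v$ adjacent to $u_i$ but to neither of $u_{i\pm1}$---induces a $2K_2$, not a claw centred at $u_i$: a claw at $u_i$ would require $u_i$ to be adjacent to $u_{i-1}$ and $u_{i+1}$, which it is not. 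So the mechanism that is supposed to ``manufacture'' the $C_5$ does not exist, and ``tracking the forced adjacencies around the cycle'' is a promise rather than an argument.

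A sanity check shows the missing step cannot be routine: $\overline{C_7}$ itself is claw-free (it has independence number $2$, so contains no stable triple at all) and contains no induced $C_5$, and the disjoint union of $\overline{C_7}$ with three isolated vertices satisfies every hypothesis except connectedness yet still has no $C_5$. Hence any correct proof must make essential, simultaneous use of connectedness and of a path joining the antihole to wherever the size-three stable set is realised---typically by classifying, via claw-freeness, the possible neighbourhoods in the antihole of an adjacent external vertex and then walking along a shortest path from the antihole towards the stable set. Your plan mentions the hypothesis $\alpha(G)\geq3$ but never invokes connectedness in the argument, and the ``short case analysis'' it defers to is in fact the whole proof. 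As it stands the proposal consists of two correct preliminary remarks plus a restatement of the difficulty; the lemma is not proved.
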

\begin{corollary}\label{cor:benrebea}
A claw-free graph with $\alpha(G)\geq3$ is perfect if and only if it has no odd hole.
\end{corollary}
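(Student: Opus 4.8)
The plan is to derive Corollary~\ref{cor:benrebea} directly from Lemma~\ref{lem:benrebea}, using only the definition of perfection given in the preliminaries. Recall that $G$ is \emph{perfect} if neither $G$ nor $\overline{G}$ contains an odd hole; equivalently, $G$ is perfect if it contains no odd hole and no odd antihole (an odd antihole being a hole in $\overline{G}$). So the content of the corollary is that, under the standing hypotheses (claw-free and $\alpha(G)\geq 3$), the odd-antihole condition is redundant: forbidding odd holes already forbids odd antiholes.

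The forward direction is immediate from the definition. If $G$ is perfect, then by definition $G$ contains no odd hole, so nothing is to prove. For the reverse direction, suppose $G$ is claw-free with $\alpha(G)\geq 3$ and $G$ has no odd hole; I want to conclude $G$ is perfect, i.e.\ $G$ also has no odd antihole. The key observation is that it suffices to rule out odd antiholes, and Lemma~\ref{lem:benrebea} does exactly this once one handles connectivity. First I would reduce to the connected case: if $G$ is disconnected, an odd antihole of $G$ is an odd \emph{hole} of $\overline{G}$, and since a hole is connected and induced, its complement lies within a single component structure; more carefully, I would note that any odd antihole $H$ induces a connected subgraph $G[H]$ (the complement of a cycle of length $\geq 5$ is connected), and $G[H]$ is itself claw-free with $\alpha(G[H])\geq 3$ (a $C_7$ or longer antihole contains an independent triple, and one checks $C_5$ is self-complementary so $\alpha=2$ only in the length-five case — this is the delicate point).

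Here is the crux. Suppose for contradiction that $G$ contains an odd antihole. I apply Lemma~\ref{lem:benrebea} to the induced subgraph on the vertex set of that antihole (after passing to its connected component, which is connected, claw-free, and inherits $\alpha\geq 3$ provided the antihole has length at least $7$). The lemma then yields a hole of length five inside $G$. But a hole of length five is an odd hole, contradicting the assumption that $G$ has no odd hole. The one case Lemma~\ref{lem:benrebea} does not immediately dispatch is when the only antiholes present are $C_5$ itself, since $C_5\cong\overline{C_5}$ is simultaneously an odd hole and an odd antihole; but a $C_5$ \emph{is} an odd hole, so its presence already violates the hypothesis. Thus every odd antihole either is a $C_5$ (excluded as an odd hole) or has length $\geq 7$ (excluded via the lemma, which produces a forbidden $C_5$).

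The main obstacle I anticipate is the bookkeeping around the hypothesis $\alpha(G)\geq 3$ and the connectedness requirement of Lemma~\ref{lem:benrebea}: I must confirm that the subgraph to which I apply the lemma genuinely satisfies both, which is why isolating the $C_5$ case separately is essential. Once those hypotheses are verified, the argument is a short contradiction. I would therefore structure the writeup as: (i) the trivial forward implication from the definition; (ii) the reverse implication by contraposition, splitting on whether an offending odd antihole is a $C_5$ or longer; and (iii) in the longer case, invoke Lemma~\ref{lem:benrebea} on the appropriate connected induced subgraph to extract a length-five hole, contradicting odd-hole-freeness.
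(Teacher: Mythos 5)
Your overall strategy (the paper's definition of perfection via the Strong Perfect Graph Theorem, plus Lemma~\ref{lem:benrebea} to rule out odd antiholes) is the same as the paper's, but the step you yourself flag as ``the delicate point'' is wrong, and it breaks the argument. You claim that an odd antihole of length at least $7$ contains an independent triple, so that $G[H]$ has $\alpha(G[H])\geq 3$ and Lemma~\ref{lem:benrebea} can be applied to $G[H]$. In fact an independent set of $\overline{C_{2k+1}}$ is a clique of $C_{2k+1}$, so $\alpha(\overline{C_{2k+1}})=\omega(C_{2k+1})=2$ for every $k\geq 2$, not only for $k=2$. Hence the subgraph induced on an antihole never satisfies the hypothesis $\alpha\geq 3$, and the lemma cannot be invoked there. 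The paper's (one-line) proof applies Lemma~\ref{lem:benrebea} to $G$ itself, where ``claw-free'' and ``$\alpha(G)\geq 3$'' are given: if $G$ contained any odd antihole, the lemma would produce a $5$-hole in $G$, contradicting odd-hole-freeness, and the definition of perfection then finishes the proof. Your separate treatment of the $C_5$ case is also unnecessary on this route, since $\overline{C_5}\cong C_5$ is itself an odd hole.

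Nor can your reduction be repaired by passing to the connected component of $G$ containing the antihole, because that component may also have independence number $2$: the graph $\overline{C_7}$ is connected, claw-free, has no odd hole, has $\alpha=2$, and is not perfect; its disjoint union with a single vertex is claw-free, has no odd hole, has $\alpha=3$, and is still not perfect. So the connectivity hypothesis of Lemma~\ref{lem:benrebea} is genuinely needed and cannot be dropped from the corollary; the paper leaves it tacit because the corollary is only ever applied to atoms of a clique cutset decomposition, which have no clique cutset and are therefore connected. The correct writeup is simply: for connected $G$, apply Lemma~\ref{lem:benrebea} to $G$ directly.
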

\begin{proof}
From the Strong Perfect Graph Theorem~\cite{ChRoST}, Lemma~\ref{lem:benrebea} implies that a claw-free graph with $\alpha(G)\geq 3$ is perfect if and only if it contains no odd hole.
\end{proof}
Chv\'atal and Sbihi~\cite{ChSb} gave a decomposition theorem via clique cutsets for claw-free perfect graphs.
As described in Section~\ref{sec:cliques},
a clique cutset decomposition can be described by a binary tree whose interior vertices are cliques, and whose leaves are atoms.
\begin{thm}[Chv\'atal and Sbihi]\label{thm:Chvatal}
If a claw-free perfect graph has no clique cutset then it is either elementary or peculiar. \qed
\end{thm}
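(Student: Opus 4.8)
The statement is a cited structural theorem, so the first thing I would fix is the meaning of the two terms in its conclusion, neither of which is defined in the excerpt. I would take \emph{elementary} to mean (following Chv\'atal and Sbihi) that the edges of $G$ admit a colouring with two colours in which no induced path on three vertices is monochromatic, and \emph{peculiar} to be their explicitly defined sporadic family (a rigid configuration of a few large cliques with prescribed overlaps). With these fixed, the plan is to extract enough local structure from the claw-free and perfect hypotheses to either manufacture such a $2$-edge-colouring, showing $G$ is elementary, or to be forced into the peculiar configuration.

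The local engine of the whole argument is the structure of neighbourhoods. Since $G$ is claw-free, three pairwise non-adjacent neighbours of a vertex $v$ would form a claw, so $\alpha(G[\nb(v)])\le 2$; equivalently $\overline{G[\nb(v)]}$ is triangle-free. As $G$ is perfect, $G[\nb(v)]$ is perfect, hence so is its complement, and a triangle-free perfect graph is bipartite. Therefore $\overline{G[\nb(v)]}$ is bipartite, which says precisely that $\nb(v)$ can be partitioned into two cliques. This ``two-clique neighbourhood'' property, holding at every vertex, is the invariant I would propagate throughout. (The degenerate case $\alpha(G)\le 2$, where $\overline{G}$ itself is bipartite, is handled directly; so I may assume $\alpha(G)\ge 3$ and invoke Corollary~\ref{cor:benrebea} to replace ``perfect'' by ``no odd hole'' everywhere below.)

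First I would try to globalise the local bipartition of each neighbourhood into a coherent assignment of a colour in $\{1,2\}$ to every edge, so that the two cliques at each vertex correspond to the two colour classes of the edges incident with it. Consistency of this colouring along shared edges and around triangles is where perfectness enters: a monochromatic induced $P_3$, or a colour conflict propagated around a triangle or a longer cycle, must be resolved, and the absence of odd holes sharply limits how the neighbourhood-bipartitions at adjacent vertices can disagree. The hypothesis of no clique cutset is what lets me assert that the graph does not split into independently-coloured pieces, so that conflicts must genuinely propagate: either the colouring closes up consistently, certifying that $G$ is elementary, or the obstruction is confined to a single rigid substructure.

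The hard part, and the core of the Chv\'atal--Sbihi argument, is the case analysis establishing that the \emph{only} obstruction to a consistent $2$-edge-colouring is the peculiar configuration. Concretely, I would isolate a minimal substructure that admits no such colouring, use claw-freeness to show it forces two (or three) large cliques meeting in a controlled overlap, and then use the no-odd-hole condition together with the no-clique-cutset hypothesis to pin down how every remaining vertex attaches, recovering exactly the peculiar template and ruling out any larger non-elementary graph. I expect the bookkeeping in this last step---checking that each vertex outside the distinguished cliques attaches in one of finitely many admissible ways, and that no such attachment secretly creates a clique cutset or an odd hole---to be the most delicate and lengthy part of the proof, and the place where the perfect-graph hypothesis is used most heavily.
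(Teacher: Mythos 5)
Your opening reduction is sound and is indeed the standard first move: claw-freeness gives $\alpha(G[\nb(v)])\le 2$, so $\overline{G[\nb(v)]}$ is triangle-free, and perfection (closed under complementation and induced subgraphs) then forces $\overline{G[\nb(v)]}$ to be bipartite, i.e.\ every neighbourhood is the union of two cliques. Your reading of ``elementary'' also matches the definition used here (no monochromatic induced $P_3$, equivalently the Gallai graph is bipartite). But be aware that the paper offers no proof of Theorem~\ref{thm:Chvatal} at all --- it is imported verbatim from Chv\'atal and Sbihi~\cite{ChSb} with a \textit{qed} attached to the statement --- so there is no in-paper argument to match yours against; the only yardstick is the original twenty-odd-page case analysis in~\cite{ChSb}.

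Measured against that, your proposal has a genuine gap rather than a wrong step: everything after the two-clique-neighbourhood observation is a plan, not an argument. The entire content of the theorem is the claim that the \emph{only} irreducible obstruction to an elementary colouring in a claw-free perfect graph with no clique cutset is the peculiar configuration, and you explicitly defer this (``the hard part\dots the most delicate and lengthy part'') without identifying the minimal non-elementary substructure, showing how claw-freeness and the absence of odd holes force the three cliques $K_1,K_2,K_3$ and the six parts $A_i,B_i$ with their prescribed complete/anticomplete adjacencies, or verifying that no further vertex can attach without creating a clique cutset, a claw, or an odd hole. Two smaller cautions: your plan to ``replace perfect by no odd hole everywhere'' via Corollary~\ref{cor:benrebea} is fine for the hypotheses, but the two-clique-neighbourhood step itself needs odd \emph{antiholes} excluded from $G[\nb(v)]$ (an odd cycle of length $\ge 5$ in $\overline{G[\nb(v)]}$ is an antihole, not a hole), so you must keep full perfection, or Ben Rebea's Lemma~\ref{lem:benrebea}, available at that point; and the no-clique-cutset hypothesis does more than prevent the graph from ``splitting into independently coloured pieces'' --- in~\cite{ChSb} it is what pins down the attachment of every vertex to the candidate peculiar core. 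As it stands the proposal is a correct orientation toward the proof, not a proof.
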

We will describe elementary and peculiar graphs below. These will be the atoms of the decomposition.

\subsubsection{Peculiar graphs}
A \emph{peculiar} graph is constructed as follows. A set $K$ of vertices, initially a clique, is partitioned into six non-empty subsets $A_1, A_2,  A_3, B_1, B_2, B_3$. At least one edge is removed from each of the edge sets $(A_1,B_2)$, $(A_2,B_3)$ and $(A_3,B_1)$. Finally, three disjoint nonempty cliques $K_1, K_2, K_3$ are added, and  each vertex in $K_i$ is made adjacent to every vertex in $K \sm (A_i\cup B_i)$ for $i=1,2,3$.

The smallest peculiar graph, with $|A_i|,|B_i|,|K_i|=1$ ($i=1,2,3$) is shown in Fig.~\ref{fig:peculiar}. The black vertices are $A$'s, the white $B$'s and the grey $K$'s. This graph is a template for all peculiar graphs, as shown by Chv\'atal and Sbihi~\cite{ChSb}.

We will need the following simple observation about peculiar graphs.
\begin{lemma}\label{lem:peculiar}
A peculiar graph $G=(V,E)$ has independence number $\alpha(G)=3$. Any independent set of size three has one vertex in each of $K_1,K_2,K_3$.
\end{lemma}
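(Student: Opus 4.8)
The plan is to analyse an arbitrary independent set $I$ of the peculiar graph $G=(V,E)$ by recording how it meets the three added cliques $K_1,K_2,K_3$ and the ``core'' set $K$, and to exploit the fact that the only non-edges present are tightly constrained. First I would isolate three elementary structural facts. (i) Each $K_i$ is a clique, so $|I\cap K_i|\le 1$; and since the construction adds no edges between distinct $K_i$, the set consisting of one vertex from each $K_i$ is independent, giving $\alpha(G)\ge 3$. (ii) A vertex of $K_i$ is adjacent to all of $K$ except $A_i\cup B_i$, so any vertex of $K$ that lies in $I$ alongside a vertex of $K_i$ must belong to $A_i\cup B_i$. (iii) Each set $A_i\cup B_i$ induces a clique: the only deleted edges inside $K$ join $A_1$ to $B_2$, $A_2$ to $B_3$, or $A_3$ to $B_1$, and none of these pairs lies inside a single $A_i\cup B_i$.

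The structural heart of the argument is to show that the core set $K$ has independence number $2$, i.e. $\alpha(G[K])=2$. Because edges are deleted only between the cyclically matched pairs $(A_1,B_2)$, $(A_2,B_3)$, $(A_3,B_1)$, two non-adjacent vertices $u,v$ of $K$ must form one such matched pair, say $u\in A_1$ and $v\in B_2$. A third vertex non-adjacent to $u$ would be forced into $B_2$, while being non-adjacent to $v$ would force it into $A_1$; since $A_1\cap B_2=\es$, no such vertex exists. I expect this to be the only step requiring genuine care: one must check that the cyclic pattern of deletions really does preclude three pairwise non-adjacent vertices, and observe that the phrase ``at least one edge removed'' (rather than all) is harmless, since every missing edge still joins one of the three matched pairs, so the set of non-edges is contained in the union of those three products.

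Finally I would assemble these pieces. Write $I_0=I\cap K$ and let $S=\{i:I\cap K_i\ne\es\}$, so that $|I|=|I_0|+|S|$. If $|S|\ge 2$, then by (ii) every vertex of $I_0$ lies in $(A_i\cup B_i)\cap(A_j\cup B_j)=\es$ for distinct $i,j\in S$ (the six parts being disjoint), whence $I_0=\es$ and $|I|=|S|\le 3$. If $|S|=1$, then $I_0$ lies inside a single $A_i\cup B_i$, a clique by (iii), so $|I_0|\le 1$ and $|I|\le 2$. If $|S|=0$ then $|I|=|I_0|\le 2$ by the core bound. In every case $|I|\le 3$, and combined with $\alpha(G)\ge 3$ from (i) this gives $\alpha(G)=3$. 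Moreover the case analysis shows that equality $|I|=3$ can occur only in the branch $|S|=3$ with $I_0=\es$, i.e. $I$ has exactly one vertex in each of $K_1,K_2,K_3$, which is precisely the claimed structure.
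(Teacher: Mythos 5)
Your proof is correct and complete, but it reaches the two key facts by a somewhat different route than the paper. For the upper bound $\alpha(G)\le 3$ the paper simply exhibits three cliques covering $V$, namely $K_1\cup A_3\cup B_2$, $K_2\cup A_1\cup B_3$ and $K_3\cup A_2\cup B_1$ (each $K_i$ is complete to $K\sm(A_i\cup B_i)$, and no deleted edge lies inside any of these three sets); you instead obtain the bound from a case analysis on the set $S$ of indices $i$ with $I\cap K_i\ne\es$, which requires the auxiliary fact $\alpha(G[K])\le 2$. For that auxiliary fact the paper observes that $K$ is cobipartite, being covered by the two cliques $A_1\cup A_2\cup A_3$ and $B_1\cup B_2\cup B_3$ (every deleted edge joins an $A$-part to a $B$-part), whereas you argue directly from the cyclic pattern of deletions that three pairwise nonadjacent vertices of $K$ cannot exist; both are sound, though the cobipartite observation is quicker and sidesteps the discussion of exactly which non-edges can occur. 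For the structure of maximum independent sets the paper runs a separate argument starting from $v_1\in K_1$ and using that $A_1\cup B_1\cup K_j$ is a clique for $j=2,3$, while your single case analysis on $|S|$ delivers the structural statement as a byproduct of the bound $|I|\le 3$. The net effect is that your argument is slightly longer but more systematic, and the paper's is shorter because its clique covers do double duty.
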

\begin{proof}
Note that $K_1\cup A_3\cup B_2$, $K_2\cup A_1\cup B_3$, $K_3\cup A_2\cup B_1$ are three cliques which cover $V$, so  $\alpha(G)\leq 3$. However, we can form an independent set of size three by taking one vertex from each of $K_1,K_2,K_3$, so $\alpha(G)\geq3$.

Let $I=\{v_1,v_2,v_3\}$ be any maximum independent set in $G$. Suppose first $I\cap K_i=\es$ for all $i=1,2,3$. Then $I$ is contained in $K$.
But $K$ is a perfect graph with vertices contained in two disjoint cliques, $A_1\cup A_2\cup A_3$ and $B_1\cup B_2\cup B_3$. Thus $K$ is a cobipartite graph, with $\alpha(K) \leq 2$. Hence $\alpha(G) \leq 2$, a contradiction.

Thus, \wlg, assume $v_1\in K_1\cap I$.
Now $\nb[v] = (K \cup K_1)\sm(A_1\cup B_1)$. So $v_2,v_3\in A_1\cup B_1\cup K_2\cup K_3$. But $A_1\cup B_1\cup K_i$ is a clique for $i=2,3$. Thus, if $v_2\in A_1\cup B_1$, $v_3$ cannot exist. Thus $v_2\in K_2$, \wlg, and then we must have $v_3\in K_3$.
\end{proof}
In Fig.~\ref{fig:peculiar},  the three corner triangles cover all the vertices, and the three corner vertices form an independent set.

\begin{figure}
\begin{center}
\begin{tikzpicture}[line width=0.5pt,inner sep=0pt,minimum size=0mm,scale=1.5,font=\small]
 \foreach \x in {0,2,4}{\draw (60*\x:1cm) node[b] (\x) {} ; }
 \foreach \x in {1,3,5}{\draw (60*\x:1cm) node[w] (\x) {} ; }
 \foreach \x in {0,1,2}{\draw (120*\x-30:1.7cm) node[g] (k\x) {} ; }
 \draw (0)--(1)--(2)--(3)--(4)--(5)--(0) (0)--(2)--(4)--(0) (1)--(3)--(5)--(1);
 \draw (0)--(k0)--(5) (1)--(k1)--(2) (3)--(k2)--(4) ;
 \draw (4)edge[bend right=30](k0) (1)edge[bend left=30](k0) ;
 \draw (0)edge[bend right=30](k1) (3)edge[bend left=30](k1) ;
 \draw (2)edge[bend right=30](k2) (5)edge[bend left=30](k2) ;
 \end{tikzpicture}
 \caption{Minimal peculiar graph from \cite{ChSb}}\label{fig:peculiar}
\end{center}
\end{figure}
Peculiar graphs do not form an hereditary class. If in a subgraph $G$ of a peculiar graph, $K_i=\es$ holds for any $i=1,2,3$, it follows that $\alpha(G)\leq 2$. However, if $\alpha(G)\leq 2$, then $G$ is a clique or a cobipartite graph. But both of these are elementary graphs, as defined in Section~\ref{sec:elementary} below. Thus we may insist that a peculiar graph has $K_i\neq \es$ ($i=1,2,3$) and $\alpha(G)=3$.

\subsubsection{Elementary graphs}\label{sec:elementary}
Chv\'atal and Sbihi called a graph $G=(V,E)$ \emph{elementary} if $E$ can be two-(edge)-coloured so that edges $xy,yz\in E$ have distinct colours whenever $xz\notin E$. Such a colouring is called elementary. It is clear from this that elementary graphs form a hereditary class. Whether $G$ has an elementary colouring can be checked by forming the \emph{Gallai graph} Gal$(G)=(E,\cE)$, where  $\{xy,yz\}\in \cE$ if and only if $(x,y,z)$ is a $P_3$. Clearly Gal$(G)$ can be constructed in time $O(mn)$, by taking all pairs of $xy\in E$ and $z\in V$, and checking that $yz\in E$, $xz\notin E$. Then an elementary colouring exists if and only if Gal$(G)$ is bipartite. If so, the two colour classes of Gal$(G)$ give an elementary colouring of $G$.  This can be recognised in $O(|\cE|)$ time by breadth-first search, so this gives an $O(mn)$ algorithm for recognising elementary graphs and determining an elementary colouring. Note that some edges of $G$  may be left uncoloured by this process. These can be coloured arbitrarily if a full colouring is required.

Maffray and Reed~\cite{MaRe} characterised elementary graphs in a very precise way. They showed
\begin{thm}[Maffray and Reed]
G is elementary if and only if it is an augmentation of the line graph of bipartite multigraph.\qed
\end{thm}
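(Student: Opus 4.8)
The plan is to prove the two implications separately, throughout fixing an elementary $2$-colouring of $E(G)$ with colours \emph{red} and \emph{blue}; recall from the discussion of the Gallai graph just above that such a colouring exists exactly when $G$ is elementary, and that producing one is equivalent to exhibiting a bipartition of $\mathrm{Gal}(G)$. The single fact I would extract and use repeatedly is the contrapositive of the defining condition: if two edges $xy,yz\in E$ get the same colour, then $xz\in E$. In other words, no induced $P_3$ is monochromatic, and any two equicoloured edges at a common vertex close into a triangle of $G$. Before starting I would fix the precise Maffray--Reed notion of \emph{augmentation} (not reproduced in this excerpt): informally one starts from the line graph $L(B)$ of a bipartite multigraph $B$ and substitutes cliques for the endpoints of certain \emph{flat} edges, cross-joining the substituted cliques in an elementary fashion. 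The whole argument turns on this definition, so pinning it down is the first task.

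For the easy direction ($\Leftarrow$) I would first handle a bare line graph $L(B)$ with $B=(X\cup Y,F)$ bipartite. Each vertex of $L(B)$ is an edge of $B$ with one end in $X$ and one in $Y$; colour an edge $\{e,f\}$ of $L(B)$ red if $e,f$ share their $X$-end and blue if they share their $Y$-end, which is well defined since in a bipartite multigraph two edges meet in at most one vertex, lying in exactly one side. To see this is elementary, take an induced $P_3$ $e$--$f$--$g$ in $L(B)$: the middle edge $f=\{x,y\}$ meets both $e$ and $g$, and if both met $f$ at $x$ (or both at $y$) then $e,g$ would share that vertex and be adjacent, contradiction; hence one meets $f$ in $X$ and the other in $Y$, so the two incident edges of $L(B)$ get different colours. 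I would then verify that the augmentation operation preserves elementarity: substituting a clique for a vertex and cross-joining along a flat edge as prescribed creates only new monochromatic \emph{triangles} and never a new monochromatic induced $P_3$, so the extended colouring is still elementary.

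For the hard direction ($\Rightarrow$) I would analyse the two colour classes as spanning subgraphs $G_R,G_B$. The naive hope that each class is a disjoint union of cliques of $G$ is \emph{false}: the \emph{gem} is elementary yet, as noted in the introduction, is not the line graph of any bipartite multigraph, so the structural content lies exactly in quantifying this failure. Using triangle-closure, I would partition $V(G)$ into maximal monochromatic pieces that will play the role of the vertices of $B$, form a bipartite multigraph $B$ whose two sides index the red pieces and the blue pieces, and attach to each vertex of $G$ one edge of $B$ joining the red piece and the blue piece containing it (parallel edges arising exactly when two vertices share both a red and a blue piece). One then checks that adjacency in $G$ corresponds to two such edges of $B$ sharing an end, i.e.\ to adjacency in $L(B)$, \emph{except} where a monochromatic piece fails to be a clique of $G$; the remaining work is to show that each such defect is precisely an instance of the augmentation gadget, so that $G$ is realised as an augmentation of $L(B)$ — and that when every piece happens to be a clique one recovers $G=L(B)$ with no augmentation needed.

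The main obstacle is the hard direction, and within it the fact that colour classes are not disjoint unions of cliques. One must characterise exactly how a connected monochromatic piece can fail to be complete in $G$ and prove that every such failure is an augmentation of a flat edge and nothing more exotic, establishing an exact correspondence between ``colour-class defects'' and the Maffray--Reed augmentation gadgets. Verifying that the constructed $B$ is genuinely bipartite, and that no forbidden configuration (in particular an odd hole, which the bipartiteness of $\mathrm{Gal}(G)$ already excludes) can sneak in, is where essentially all the effort lies.
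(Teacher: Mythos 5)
This theorem is quoted in the paper with a \qed precisely because it is \emph{not} proved there: it is an external structural result of Maffray and Reed~\cite{MaRe}, whose original proof is a long case analysis. So the only meaningful comparison is between your sketch and that original argument, and against that standard your proposal has a genuine gap: the entire forward direction is deferred rather than proved. You partition $V(G)$ into maximal monochromatic pieces, propose a bipartite multigraph $B$ on these pieces, and then write that ``one then checks'' that adjacency in $G$ matches adjacency in $L(B)$ except at ``defects'', and that ``the remaining work is to show that each such defect is precisely an instance of the augmentation gadget.'' That remaining work \emph{is} the theorem. Nothing in the sketch rules out a more exotic failure mode: you must show that a monochromatic piece that is not a clique decomposes into exactly two cliques $X,Y$ joined by a nonempty cut, that the corresponding contracted pair behaves as a \emph{flat} edge of the underlying line graph (i.e.\ $X$ and $Y$ have no common neighbour outside the augment), that $X$ is complete to $\nb(Y)\sm X$ and vice versa, and that distinct defects are vertex-disjoint and independent so that the augmentations can be performed one at a time. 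None of these claims is even stated, let alone argued, and they are where the elementary-colouring hypothesis actually gets used. As it stands the proposal is a plan for a proof, not a proof.

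Two smaller points. First, in the backward direction your claim that the colouring of $L(B)$ is ``well defined since in a bipartite multigraph two edges meet in at most one vertex'' is false: parallel edges meet in two vertices, one on each side, so your rule assigns both colours to the corresponding edge of $L(B)$. This is repairable (a pair of parallel edges is adjacent to every common neighbour, so that edge of $L(B)$ can be coloured arbitrarily without creating a monochromatic induced $P_3$), but the justification you give does not work for multigraphs, which is exactly the case the theorem is about. Second, the claim that augmentation preserves elementarity is asserted, not checked; extending the colouring over the cobipartite augment $X\cup Y$ with an arbitrary nonempty cut $F\subseteq(X,Y)$ requires an explicit choice of colours for the edges inside $X$, inside $Y$, and in $F$, and a verification against the edges leaving the augment --- this uses the flatness of $xy$ in an essential way and should be written out.
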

We must describe the ``augmentation'' in this theorem. An edge $xy$ in $G$ is called \emph{flat} if $x,y$ have no common neighbour.
Then we augment the flat edge by replacing $x$ by a clique $X$, $y$ by a clique $Y$, and $xy$ by any non-empty edge set $F\subseteq (X,Y)$. That is, we replace $xy$ by a cobipartite graph called an \emph{augment}. Finally we add all edges between $X$ and $\nb(x)\sm\{y\}$ and all edges between $Y$ and $\nb(y)\sm\{x\}$, see Fig.~\ref{fig:augment}.  Then a graph~$G'$ is an augmentation of a graph~$G$ if  $G'$ can be obtained from~$G$ by applying one or more such steps to independent flat edges in $G$. See Fig.~\ref{fig:augmentation}, where the first graph is a line graph of a bipartite multigraph, as we will show below. The second and third show augmentations using the \emph{independent} flat edges $x_1y_1$ and then $x_2y_2$ and two $2\times 2$ cobipartite graphs as replacements.

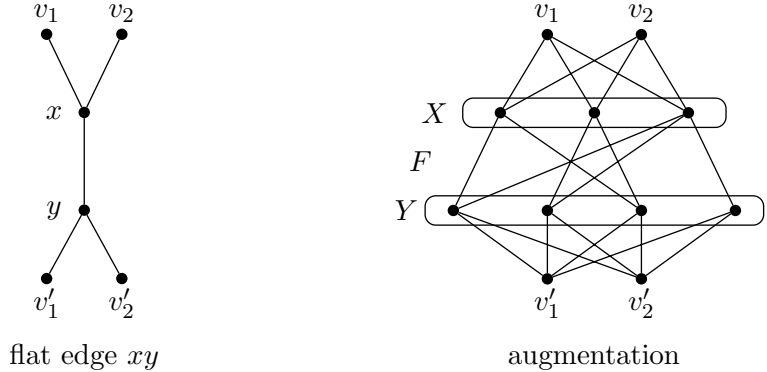
\begin{figure}
\begin{center}
\begin{tikzpicture}[line width=0.5pt,inner sep=0pt,minimum size=0mm,xscale=2,yscale=1.3,font=\small]
\draw (0.25,0.3) node[b,label=below:$\strut v'_1$] (v'1) {}(0.75,0.3) node[b,label=below:$\strut v'_2$] (v'2) {}
(0.5,2) node[b] (x) {}
(0.5,1) node[b] (y) {}
 (0.25,2.8) node[b,label=above:$\strut v_1$] (v1) {} (0.75,2.8) node[b,label=above:$\strut v_2$] (v2) {} ;
\draw (0.5,-0.5) node{flat edge $xy$};
\draw (v'1)--(y)--(v'2) (x)--(y) (v1)--(x)--(v2);
\draw (0.3,2) node {$x$} (0.3,1) node {$y$} ;
\end{tikzpicture}\hspace{3cm}
\begin{tikzpicture}[line width=0.5pt,inner sep=0pt,minimum size=0mm,xscale=2.5,yscale=1.3,font=\small]
\draw (0.25,0.3) node[b,label=below:$\strut v'_1$] (v'1) {}(0.75,0.3) node[b,label=below:$\strut v'_2$] (v'2) {}
(0,2) node[b] (x1) {} (0.5,2) node[b] (x2) {} {} (1,2) node[b] (x3) {}
(-0.25,1) node[b] (y1) {} (0.25,1) node[b] (y2) {} {} (0.75,1) node[b] (y3) {} {} (1.25,1) node[b] (y4) {}
 (0.25,2.8) node[b,label=above:$\strut v_1$] (v1) {} (0.75,2.8) node[b,label=above:$\strut v_2$] (v2) {} ;
\draw (0.5,-0.5) node{augmentation};
\draw (v'1)--(y1)--(v'2) (v'1)--(y2)--(v'2) (v'1)--(y3)--(v'2) (v'1)--(y4)--(v'2)
(y1)--(x1) (y2)--(x2) (y1)--(x3) (y2)--(x3) (x3)--(y4) (x2)--(y3)--(x1)
(v1)--(x1)--(v2) (v1)--(x2)--(v2) (v1)--(x3)--(v2) ;
\draw[rounded corners] (-0.2,1.85) rectangle (1.2,2.15) (-0.4,0.85) rectangle (1.4,1.15) ;
\draw (-0.35,2) node {$X$} (-0.5,1) node {$Y$} (-0.425,1.5) node{$F$} ;
\end{tikzpicture}
\end{center}
\caption{Augmenting a flat edge}\label{fig:augment}
\end{figure}

We observe that an augmentation of a line graph of a bipartite multigraph need not be a line graph
of a bipartite multigraph, as can be seen in Fig.~\ref{fig:augmentation} where
the two augmentations  contain a  gem which is an excluded structure for
the class of line graphs of bipartite multigraphs by the following characterisation of this class.

\begin{thm}[Maffray and Reed~\cite{MaRe}]\label{char:lgbm}
A graph is the line graph of bipartite multigraph if and only if it is (claw, gem, 4-wheel, odd hole)-free.
(See Fig.~\ref{fig:clawdiagem}.)
\end{thm}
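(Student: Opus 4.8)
The plan is to prove both implications, treating the class of line graphs of bipartite multigraphs (call it LGBM) as a \emph{hereditary} class: if $G=L(H)$ and $S\subseteq V(G)=E(H)$, then $G[S]=L(H')$, where $H'$ is the sub-multigraph of $H$ on the edges in $S$, and $H'$ is again bipartite. Hence for the forward implication it suffices to verify that none of the claw, gem, 4-wheel, or odd hole is itself an LGBM. Three facts make this routine. First, for every vertex $e=uv$ of $L(H)$ the neighbourhood $\nb(e)$ is the union of the two cliques formed by the edges of $H$ at $u$ and at $v$, so $\alpha(\nb(e))\le 2$; the claw violates this. Second, an induced cycle of length $k\ge 4$ in $L(H)$ traces a cycle of length $k$ in $H$, which is impossible for odd $k$ when $H$ is bipartite; this rules out odd holes. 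Third, since a bipartite $H$ has no triangle, every triangle of $L(H)$ consists of three edges of $H$ through a common vertex (a ``star''). I would then check that making all triangles of the gem and of the 4-wheel simultaneously into stars forces two of their vertices to coincide as parallel edges of $H$, thereby creating an edge that is absent in the gem (respectively, 4-wheel), a contradiction.

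For the converse, assume $G$ is (claw, gem, 4-wheel, odd hole)-free; I may assume $G$ connected, treating components separately since LGBM is closed under disjoint union. The first step is to show $G$ is perfect. It has no odd hole by hypothesis, and it has no odd antihole: the antihole $\overline{C_5}=C_5$ is already an odd hole, while every longer odd antihole contains an induced gem (a vertex of the antihole together with its neighbourhood, which induces a $P_4$, forms a gem -- exactly as $\overline{C_7}$ contains $\{0,2,3,4,5\}$ as a gem). By the Strong Perfect Graph Theorem~\cite{ChRoST}, $G$ is then perfect; combined with Corollary~\ref{cor:benrebea} (for the case $\alpha(G)\ge 3$) this places $G$ among claw-free perfect graphs.

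The heart of the proof is to show that such a $G$ is \emph{elementary}. By the decomposition of Chv\'atal and Sbihi (Theorem~\ref{thm:Chvatal}), every claw-free perfect graph splits, along clique cutsets, into atoms that are elementary or peculiar. I would verify that every peculiar graph contains a 4-wheel: in the template of Fig.~\ref{fig:peculiar} the hexagon vertices $\{1,2,4,5\}$ induce a $C_4$ that is dominated by vertex $3$, and this configuration persists in every peculiar graph. Thus 4-wheel-freeness excludes all peculiar atoms, leaving only elementary ones. It then remains to propagate elementarity back through the clique cutsets to $G$ itself; equivalently, to show that the Gallai graph $\mathrm{Gal}(G)$ is bipartite, by checking that a hypothetical odd cycle of induced $P_3$'s crossing a clique cutset would reproduce one of the excluded configurations.

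Finally, having shown $G$ elementary, I would invoke the theorem of Maffray and Reed that an elementary graph is an augmentation of the line graph of a bipartite multigraph $H_0$, and then argue that this augmentation is trivial: a nontrivial augment replaces a flat edge by a cobipartite graph on at least three vertices, and (as illustrated in Fig.~\ref{fig:augment}) this introduces an induced gem or 4-wheel. Since $G$ is (gem, 4-wheel)-free, every augment is a single edge, so $G=L(H_0)$ with $H_0$ bipartite, completing the converse. The main obstacle is the elementarity step -- specifically, controlling $\mathrm{Gal}(G)$ across the clique-cutset decomposition, that is, showing that the only obstructions to a valid two-edge-colouring are the four forbidden subgraphs together with the already-excluded odd antiholes. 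By contrast, the forward direction and the augmentation-trivialisation step are comparatively routine.
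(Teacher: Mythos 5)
The paper gives no proof of this statement at all: it is imported verbatim from Maffray and Reed~\cite{MaRe} (their Theorem~5), so there is nothing internal to compare your argument against. Judged on its own terms, your forward direction is sound, as are your reduction to perfection (odd antiholes of length at least $7$ contain gems, and $\overline{C_5}=C_5$) and your observation that every peculiar graph contains a 4-wheel (take $a_i\in A_i$, $b_j\in B_j$ witnessing two of the three removed edge sets; they induce a $C_4$ dominated by any vertex of the third $A$-class). The converse, however, has two genuine gaps. The first is the step you yourself flag as ``the main obstacle'': propagating line-graph-of-bipartite-multigraph structure from the atoms back through the clique cutsets. This is not an omitted detail but the entire content of the direction being proved. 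Chv\'atal--Sbihi (Theorem~\ref{thm:Chvatal}) only controls the atoms, and neither ``elementary'' nor ``line graph of a bipartite multigraph'' is preserved by gluing along clique cutsets in general; the only hypothesis available to rescue the gluing is precisely the statement under proof. Concretely, $K_5$ minus an edge is the gluing of two copies of $K_4=L(K_{1,4})$ along a triangle; it \emph{is} the line graph of a bipartite multigraph, but only because its root multigraph is forced to contain three parallel edges, so the root graphs of the two blocks cannot simply be identified along the cutset. Your sketch says nothing about how the root multigraph of $G$ is assembled from those of the atoms.

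The second gap is that the ``augmentation-trivialisation'' step is false as stated: a nontrivial augment need not introduce a gem or a 4-wheel. The paper's own gadget $Z'$ of Fig.~\ref{fig:gadget} is a four-vertex cobipartite augment (with $F'=\{x_1y_1,x_2y_2\}$) that sits inside the line graph of a \emph{simple} bipartite graph (Fig.~\ref{fig:equivalentLG}), which is diamond-free and hence gem-free and 4-wheel-free; even more simply, augmenting a flat edge to the complete cobipartite graph on $X\cup Y$ with $|X|=|Y|=2$ can yield $L(K_{1,4})=K_4$. So you cannot conclude that every augment is a single edge; augments are exactly where the parallel edges of the root multigraph come from, and what must actually be shown is that every augment avoiding the four configurations can be realised by parallel edges or by a small bipartite substitution in the root. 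Both missing pieces are where the real work of Maffray and Reed's theorem lies; the steps you call routine are indeed routine, but the two you defer are the theorem.
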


Maffray and Reed~\cite{MaRe} show how to recover the structure of an elementary graph as the line graph of bipartite multigraph with augmented flat edges, using an elementary colouring of the graph. This can be done in $O(mn)$ time, so there is an $O(mn)$ time algorithm for determining the graph structure.
(Maffray and Reed claim only the looser bound $O(m^2)$.)

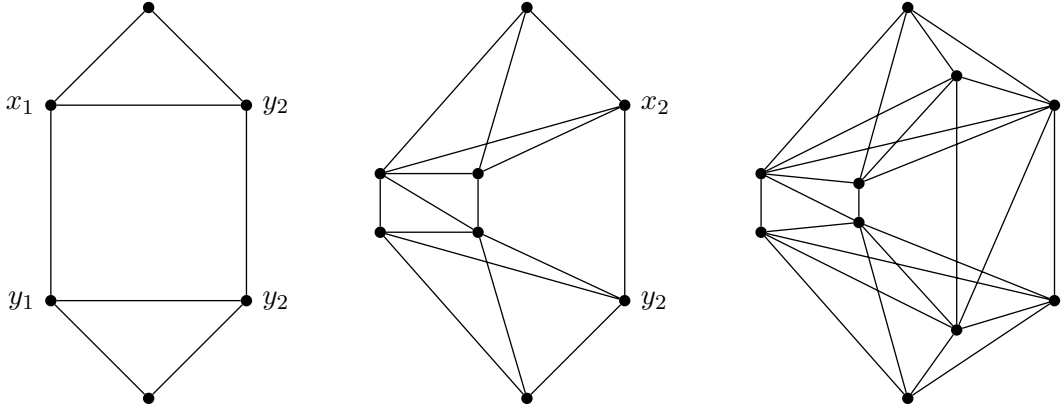
\begin{figure}[tbh]
\begin{center}
\begin{tikzpicture}[line width=0.5pt,inner sep=0pt,minimum size=0mm,scale=1.3,font=\small]
\draw (0,0) node[b] (v') {} (-1,1) node[b,label=left:$y_1\ $] (y1) {} (1,1) node[b,label=right:$\ y_2$] (y2) {}
(0,4) node[b] (v) {} (-1,3) node[b,label=left:$x_1\ $] (x1) {} (1,3) node[b,label=right:$\ y_2$] (x2) {};
\draw (v')--(y1)--(y2)--(v') (x1)--(y1) (x2)--(y2) (v)--(x1)--(x2)--(v) ;
\end{tikzpicture}\hspace{1cm}
\begin{tikzpicture}[line width=0.5pt,inner sep=0pt,minimum size=0mm,scale=1.3,font=\small]
\draw (0,0) node[b] (v') {} (-1.5,1.7) node[b] (y1) {} (-0.5,1.7) node[b] (y1') {}
(1,1) node[b,label=right:$\ y_2$] (y2) {}
(0,4) node[b] (v) {} (-1.5,2.3) node[b] (x1) {} (-0.5,2.3) node[b] (x1') {}
(1,3) node[b,label=right:$\ x_2$] (x2) {};
\draw (v')--(y1)--(y2)--(v') (x1)--(y1) (x2)--(y2) (v)--(x1)--(x2)--(v) (v)--(x1')--(x2) (v')--(y1')--(y2)
(x1)--(x1') (y1)--(y1') (x1)--(y1') (x1')--(y1');
\end{tikzpicture}\hspace{1cm}
\begin{tikzpicture}[line width=0.5pt,inner sep=0pt,minimum size=0mm,scale=1.3,font=\small]
\draw (0,0) node[b] (v') {} (-1.5,1.7) node[b] (y1) {} (-0.5,1.8) node[b] (y1') {}
(0.5,0.7) node[b] (y2) {} (1.5,1) node[b] (y2') {}
(0,4) node[b] (v) {} (-1.5,2.3) node[b] (x1) {} (-0.5,2.2) node[b] (x1') {}
(0.5,3.3) node[b] (x2) {} (1.5,3) node[b] (x2') {};
\draw (v')--(y1)--(y2)--(v') (x1)--(y1) (x2)--(y2) (v)--(x1)--(x2)--(v) (v)--(x1')--(x2) (v')--(y1')--(y2)
(x1)--(x1') (y1)--(y1') (x1)--(y1') (x1')--(y1')
(x2)--(x2') (y2)--(y2') (x2')--(y2) (x2')--(y2')
(x2')--(v) (x2')--(x1) (x2')--(x1')
(y2')--(v') (y2')--(y1) (y2')--(y1');
\end{tikzpicture}
\end{center}
\caption{Augmentation of flat edges}\label{fig:augmentation}
\end{figure}

\subsection{Counting in (claw,\,odd hole)-free graphs}\label{sec:clawfree}

From Corollary~\ref{cor:benrebea}, Lemma~\ref{lem:peculiar} and Theorem~\ref{thm:Chvatal} we  have
\begin{lemma}\label{alpha3}
  Every (claw,\,odd hole)-free graph $G$ without a clique cutset and with $\alpha(G)>3$ is elementary.
\end{lemma}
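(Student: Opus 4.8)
The plan is to obtain this lemma purely by chaining together the three structural facts already assembled in this subsection, namely Corollary~\ref{cor:benrebea}, Theorem~\ref{thm:Chvatal}, and Lemma~\ref{lem:peculiar}. Let $G$ be a (claw,\,odd hole)-free graph with no clique cutset and $\alpha(G)>3$. The first step is to verify that $G$ is perfect. Since $G$ is claw-free, contains no odd hole by hypothesis, and satisfies $\alpha(G)\geq 3$ (as $\alpha(G)>3$), Corollary~\ref{cor:benrebea} applies directly and yields that $G$ is perfect. Thus $G$ is a claw-free perfect graph with no clique cutset.

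The second step is to invoke the Chv\'atal--Sbihi decomposition. By Theorem~\ref{thm:Chvatal}, a claw-free perfect graph with no clique cutset is either elementary or peculiar, so $G$ must be one of these two. The final step is to eliminate the peculiar case: by Lemma~\ref{lem:peculiar} every peculiar graph has independence number exactly $3$, whereas $\alpha(G)>3$ by assumption, a contradiction. Hence $G$ cannot be peculiar, and we conclude that $G$ is elementary, as required.

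I do not expect any serious obstacle here; the work is entirely in confirming that the hypotheses of each cited result are met rather than in any new argument. The two points warranting a moment's care are both already covered by the conventions set up earlier in the excerpt. First, Theorem~\ref{thm:Chvatal} and Corollary~\ref{cor:benrebea} are cleanest to apply to a connected graph, and this is automatic: since $\es$ is a clique cutset of every disconnected graph, the assumption that $G$ has no clique cutset forces $G$ to be connected. Second, the threshold hypotheses line up---$\alpha(G)>3$ comfortably supplies the $\alpha(G)\geq 3$ needed for Corollary~\ref{cor:benrebea}, and it is precisely the \emph{strict} inequality $\alpha(G)>3$ that rules out the peculiar atoms via their exact value $\alpha=3$. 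The lemma is therefore an immediate consequence of the preceding results, and the proof should require only these few lines.
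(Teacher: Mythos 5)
Your proof is correct and is exactly the argument the paper intends: the lemma is stated as following ``From Corollary~\ref{cor:benrebea}, Lemma~\ref{lem:peculiar} and Theorem~\ref{thm:Chvatal}'', i.e.\ perfection via Ben Rebea, the elementary-or-peculiar dichotomy via Chv\'atal--Sbihi, and elimination of the peculiar case via $\alpha=3$. Your additional observation that the no-clique-cutset hypothesis forces connectedness (since $\es$ is a clique cutset of any disconnected graph) is a sensible check that the paper leaves implicit.
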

This gives us a clique cutset decomposition in which the atoms are in the hereditary class $\cA$ of graphs that are
either elementary or have $\alpha(G)\leq 3$. To apply the method
of Section~\ref{sec:cutsetcounting}, we must consider how to approximate $W(G)$ in these graphs.

\subsubsection{Computing $W(G)$ in graphs with $\alpha(G)\leq 3$}

Let $G$ be a (claw,\,odd hole)-free atom. For any $k$, we can determine $\cI_k(G)$ in $O(n^k)$ time by listing all $k$-tuples of vertices and checking which are independent in $G$. Thus we can determine $W_k(G)$ for $k=0,1,2,3,4$ in $O(n^4)$ time. If $W_4(G)>0$, we conclude that $G$ must be elementary, by Lemma~\ref{alpha3}. Otherwise, we set $W(G)=\sum_{k=0}^{3} W_k(G)$.

\subsubsection{Approximating $W(G)$ in elementary graphs}\label{elementary atoms}

If a (claw,\,odd hole)-free atom $G$ has $\alpha(G)>3$, then it is elementary by Lemma~\ref{alpha3}. We use the $O(mn)$ time algorithm of Maffray and Reed~\cite{MaRe} to identify \emph{$G$ as the line graph of a bipartite multigraph with augments}. If this algorithm fails, we conclude that the original graph was not (claw,\,odd hole)-free, and halt. Otherwise, we have an elementary atom, and we continue.

An elementary graph $G$ is not necessarily the line graph of bipartite multigraph because of the augments, as discussed in Section~\ref{sec:elementary}. However, we will replace $G$ by an equivalent $G'$, such that $G'$ is the line graph of bipartite multigraph. We do this by replacing the augments in $G$ by ``gadgets'' which are line graphs of bipartite multigraphs.

\subsubsection{Augmentation gadgets}

Suppose the augment $Z = X \cup Y$ in $G=(V,E)$, with vertex weights $w(v)$ ($v\in V$), comprises a cobipartite graph with cliques on $X, Y$ and connecting bipartite graph $(X\cup Y,F)$. Clearly $W_k(Z)\neq 0$, only for $k=0,1,2$.

If $U\subseteq V$, we will write $W_k(U)$ for $W_k(G[U])$. Then $W_1(X)=\sum_{v\in X} w(v)$, $W_1(Y)=\sum_{v\in Y} w(v)$ and $W_2(Z)=\sum_{uv\notin F} w(u)w(v)$. These are respectively the total weights of independent sets in $Z$ which involve $X$ alone, $Y$ alone, or both. We will also write $\overline{W}_2(Z)=W_1(X)W_1(Y)-W_{2}(Z)=\sum_{uv\in F} w(u)w(v)$.

Consider the gadget $Z'$ shown in Fig.~\ref{fig:gadget}, where $\rho,\sigma,\overline{\rho},\overline{\sigma}\geq 0$
are to be determined, Let $X'=\{x_1,x_2\}$, $Y'=\{y_1,y_2\}$. Note that both vertices in the clique $X'$ have the same
neighbours external to $Z'$ as all vertices in $X$ have to vertices external to $Z$, and similarly for $Y'$ and $Y$.

If we set $\overline{\rho}=W_1(X)-\rho$, then $W_1(X')=\rho+\overline{\rho}=W_1(X)$, and if we set
$\overline{\sigma}=W_1(Y)-\sigma$, then $W_1(Y')=\sigma+\overline{\sigma}=W_1(Y)$. Thus the total weight of
independent sets using $X'$ but not $Y'$ is $W_1(X)$, and the total weight of
independent sets using $Y'$, but not $X'$ is $W_1(Y)$, as required.
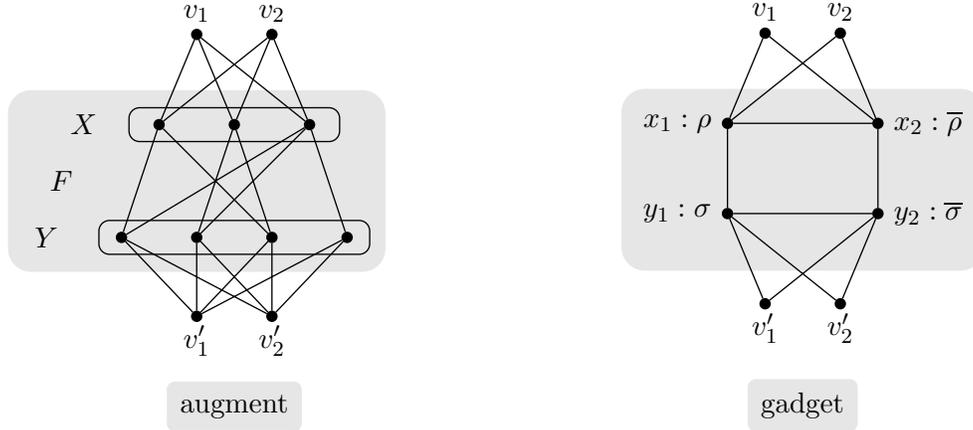
\begin{figure}[tbh]
\colorlet{lightgray}{black!10}
\begin{center}
\begin{tikzpicture}[line width=0.5pt,inner sep=0pt,minimum size=0mm,xscale=2,yscale=1.5,font=\small]
\filldraw[color=lightgray,rounded corners=3mm] (-1.0,0.7) rectangle (1.5,2.3);
\draw (0.25,0.3) node[b,label=below:$\strut v'_1$] (v'1) {}(0.75,0.3) node[b,label=below:$\strut v'_2$] (v'2) {}
(0,2) node[b] (x1) {} (0.5,2) node[b] (x2) {} {} (1,2) node[b] (x3) {}
(-0.25,1) node[b] (y1) {} (0.25,1) node[b] (y2) {} {} (0.75,1) node[b] (y3) {} {} (1.25,1) node[b] (y4) {}
 (0.25,2.8) node[b,label=above:$\strut v_1$] (v1) {} (0.75,2.8) node[b,label=above:$\strut v_2$] (v2) {} ;
\draw (0.5,-0.5) node[inner sep=5pt,fill=lightgray,rounded corners=3pt]{augment};
\draw (v'1)--(y1)--(v'2) (v'1)--(y2)--(v'2) (v'1)--(y3)--(v'2) (v'1)--(y4)--(v'2)
(y1)--(x1) (y2)--(x2) (y1)--(x3) (y2)--(x3) (x3)--(y4) (x2)--(y3)--(x1)
(v1)--(x1)--(v2) (v1)--(x2)--(v2) (v1)--(x3)--(v2) ;
\draw[rounded corners] (-0.2,1.85) rectangle (1.2,2.15) (-0.4,0.85) rectangle (1.4,1.15) ;
\draw (-0.5,2) node {$X$} (-0.75,1) node {$Y$} (-0.65,1.5) node{$F$} ;
\end{tikzpicture}\hspace{3cm}
\begin{tikzpicture}[line width=0.5pt,inner sep=0pt,minimum size=0mm,xscale=2,yscale=1.5,font=\small]
\filldraw[color=lightgray,rounded corners=3mm] (-0.7,0.7) rectangle (1.7,2.3);
\draw (0.25,0.4) node[b,label=below:$\strut v'_1$] (v'1) {}(0.75,0.4) node[b,label=below:$\strut v'_2$] (v'2) {}
(0,2) node[b,label=left:$x_1: \rho\ $] (x1) {} (1,2) node[b,label=right:$\ x_2: \overline{\rho}$] (x2) {}
 (0,1.2) node[b,label=left:$y_1: \sigma\ $] (y1) {} (1,1.2) node[b,label=right:$\ y_2: \overline{\sigma}$] (y2) {}
 (0.25,2.8) node[b,label=above:$\strut v_1$] (v1) {} (0.75,2.8) node[b,label=above:$\strut v_2$] (v2) {} ;
\draw (0.5,-0.5) node[inner sep=5pt,fill=lightgray,rounded corners=3pt]{gadget}; 
\draw (v'1)--(y1)--(v'2) (v'1)--(y2)--(v'2)
(x1)--(x2) (y1)--(x1) (y1)--(y2)  (y2)--(x2)
(v1)--(x1)--(v2) (v1)--(x2)--(v2)   ;
\end{tikzpicture}
\end{center}
\caption{Augmentation and equivalent gadget with vertex weights}\label{fig:gadget}
\end{figure}

The weight of independent sets using both $X'$ and $Y'$ is
\[ W_2(Z')=\rho\overline{\sigma}+\overline{\rho}\sigma= \rho (W_1(Y)-\sigma)+\sigma(W_1(X)-\rho)= W_2(Z)\;, \]
again as required, provided
\[ \rho\ =\ \frac{W_2(Z)-\sigma W_1(X)}{W_1(Y)-2\sigma}\;. \]
It is convenient to break symmetry by requiring $\sigma<\overline{\sigma}$, making the denominator of the above fraction positive.
We must also have $\rho,\overline{\rho}\geq 0$, so $0\leq \rho \leq W_1(X)$. Thus we require
\[0\leq\sigma \leq \min\{W_2(Z),\overline{W}_2(Z)\}/W_1(X).\] Otherwise we can choose $\sigma$ arbitrarily. Then the gadget $Z'$ is equivalent to $Z$.
The particular choice $\sigma=0$ deletes $y_1$ and its incident edges, and gives an even smaller gadget. Note that $Z'$ is itself an augment,
equivalent to $Z$ for computing $W_k(G)$ for any $0\leq k\leq\alpha(G)$.

The gadget $Z'$ is the line graph of a simple bipartite graph, see Fig.~\ref{fig:equivalentLG}.
\begin{figure}[tbh]
\begin{center}
\begin{tikzpicture}[line width=0.5pt,inner sep=0pt,minimum size=0mm,xscale=1.5,yscale=1.25,font=\small]
\draw (0.25,0) node[b] (v'1) {} (0.75,0) node[b] (v'2) {}
(0.5,2.1) node[b] (x1) {} (0,1.4) node[b] (x2) {}
 (1,1.4) node[b] (y1) {} (0.5,0.7) node[b] (y2) {}
 (0.25,2.8) node[b] (v1) {}  (0.75,2.8) node[b] (v2) {} ;
\draw (v1)--node[left]{$v_1\ $}(x1) (v2)--node[right]{$\ v_2$}(x1)
(x1)--node[left]{$x_1\ $}(x2)
(y1)--node[right]{$\ x_2$}(x1)  (y1)--node[right]{$\ y_2$}(y2)
(y2)--node[left]{$y_1\ $}(x2) (y2)--node[right]{$\ v'_2\ $}(v'2)
(y2)--node[left]{$v'_1\ $}(v'1)   ;
\end{tikzpicture}
\end{center}
\caption{Root graph of the gadget depicted in Fig.~\ref{fig:gadget}}\label{fig:equivalentLG}
\end{figure}
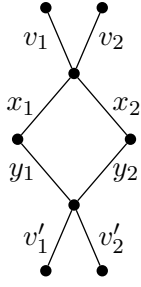
Every augment $Z$ that has less vertices than $Z'$ is also line graph of
a suitable bipartite graph.
So using this gadget to replace all augments on at least four vertices in $G$ will result in it becoming the line graph $G'$ of bipartite multigraph, as required. Moreover, the size of $G'$ does not exceed the size of $G$, which becomes relevant in the following analysis.

\subsubsection{Reduction to permanent approximation}\label{sec:permanent}

We now need to determine $W_k(G')$, where $G'$ is the line graph of a bipartite multigraph. We determine its root multigraph $G''=L^{-1}(G')$ and verify that it is bipartite
in $O(m)$ time, using (for example) the algorithm of Lehot~\cite{Lehot}. The vertex weights in the line graph 
$G'$ become edge weights in $G''$, and independent sets become matchings. The graph  $G''$ may have parallel edges,
but for our purposes we can reduce this to a simple edge-weighted bipartite graph $G^*$ by adding the weights on parallel edges.
Now $M_k(G^*)=W_k(G')$ will be the total weight of all \emph{matchings} of size $k$ in $G^*$.

Suppose $G^*=(V_1\cup V_2,E^*)$, where $n_1=|V_1|$, $n_2=|V_2|$. We wish to use the permanent algorithm of~\cite{JSV} to determine
$M_k(G^*)$. However, this algorithm only computes $M_n(G^*)$ for the perfect matching case $n_1=n_2=n$. It is possible the algorithm of~\cite{JSV} can be modified to the general case, but the general case can be reduced to the permanent, as follows.

To determine $M_k(G^*)$, let $n'_1=n_2-k$, $n'_2=n_1-k$. We add a set $V'_1$ of $n'_1$ vertices to $G^*$, and the edges of a complete bipartite graph $K_{n'_1,n_2}=(V'_1\cup V_2,V'_1\times V_2)$, and add a set $V'_2$ of $n'_2$ vertices and the edges of a complete bipartite graph $K_{n_1,n'_2}=(V_1\cup V'_2,V_1\times V'_2)$. See Fig.~\ref{fig:equivalentBG}, where $n_1=5$, $n_2=4$, $k=2$. The weights assigned to the added edges are all $1$. Let this weighted graph be $G^+=(V_1^+\cup  V_2^+,E^+)$, where $V_1^+=V_1\cup V'_1$, $V_2^+=V_2\cup V'_2$, and $E^+$ is $E^*$ plus the edges of the complete bipartite graphs. Let $n^+=|V_1^+|=|V_2^+|=n_1+n_2-k=n-k$.

Now observe that there is a correspondence between matchings of size $k$ in $G^*$ and perfect matchings in $G^+$. For each $k$-matching in $G^*$, we match the $n_1-k$ unmatched vertices in $V_1$ with vertices in $V'_2$, and we match the $n_2-k$ unmatched vertices in $V_2$ with vertices in $V'_1$. Given a perfect matching $M^+$ in $G^+$, we can uniquely recover a $k$-matching $M^*$ in $G^*$ of the same weight. However, there are $n'_1!\hspace{1pt}n'_2!=(n_1-k)!\hspace{1pt}(n_2-k)!$ matchings $M^+$ corresponding to any $M^*$. Thus $M_{n^+}(G^+)=(n_1-k)!\hspace{1pt}(n_2-k)!\hspace{1pt}M_k(G^*)$.

Thus our algorithm will use the permanent method of~\cite{JSV} to compute $M_{n^+}(G^+)$, and then divide this by $(n_1-k)!\hspace{1pt}(n_2-k)!$ to obtain $M_k(G^*)$. Thus we can determine $W_k(G')$ for any elementary graph $G'$ and any $0\leq k\leq\alpha(G')$.
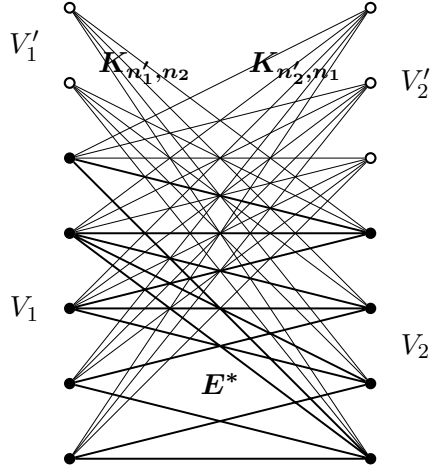
\begin{figure}
\begin{center}
\begin{tikzpicture}[line width=0.25pt,inner sep=0pt,minimum size=0mm,xscale=2,yscale=2,font=\small]
\foreach \x in {1,...,5}{\node[b] (0\x) at (0,0.5*\x){} ;};
\foreach \x in {6,...,7}{\node[w] (0\x) at (0,0.5*\x){} ;};
\foreach \x in {1,...,4}{\node[b] (1\x) at (2,0.5*\x){} ;};
\foreach \x in {5,...,7}{\node[w] (1\x) at (2,0.5*\x){} ;};
\foreach \x in {1,...,5}\foreach \y in {5,6,7}{\draw[line width=0.25pt] (0\x)--(1\y);} ;
\foreach \x in {1,...,4}\foreach \y in {6,7}{\draw[line width=0.25pt] (1\x)--(0\y);} ;
\draw[line width=0.75pt] (12)--(01)--(11)--(02) (04)--(13)--(02) (13)--(03)--(12)--(04)
(05)--(14)--(03) (14)--(04)--(11)--(05) ;
\node at (-0.3,1.5) {$V_1$}; \node at (2.3,1.25) {$V_2$};
\node at (-0.3,3.25) {$V'_1$}; \node at (2.3,3) {$V'_2$};
\node at (1,1) {$\boldsymbol{E^*}$} ; \node at (0.5,3.1) {$\boldsymbol{K_{n'_1,n_2}}$} ; \node at (1.5,3.1) {$\boldsymbol{K_{n'_2,n_1}}$} ;
\end{tikzpicture}
\end{center}
\caption{Equivalent permanent problem}\label{fig:equivalentBG}
\end{figure}
We then compute $W(G')=\sum_{k=1}^{\alpha(G')} W_k(G')$.
This has time complexity is $O(nT_{\mathrm{P}}(n,\epsilon))$, where $T_{\mathrm{P}}(n,\epsilon)$ is the time
to approximate the total weight of perfect matchings in an $n\times n$ bipartite graph with relative error $1\pm\epsilon$.  The clique decomposition in Section~\ref{sec:cutsetcounting}
gives another $n$ factor, so the overall
time complexity of approximating $W(G)$ in a (claw,\,odd hole)-free graph $G$ with $n$ vertices
is $T_{\cC}(n,\epsilon)=O(n^2T_{\mathrm{P}}(n,\epsilon n^{-2}))$, where the accuracy parameter comes from the error analysis in Section~\ref{sec:clawerror}.
The best  bound for $T_{\mathrm{P}}(n,\epsilon)$ known is $O(n^7\log^4 n+n^6\log^5(n)\epsilon^{-2})$~\cite{BSVV}.
Thus the overall time complexity for (claw,\,odd hole)-free graphs is
$T_{\cC}(n,\epsilon)=O(n^{12}\log^5(n)\epsilon^{-2})$.  This analysis is clearly loose and could be tightened.  However, without a radical improvement in the bound for $T_{\mathrm{P}}(n,\epsilon)$, the overall time complexity cannot be improved to anything practically relevant.

\subsection{Approximating $W_k(G)$}\label{sec:claw W_k}

The freedom to use very large vertex weights allows us to approximate $W_k(G)$ for any $0\leq k\leq\alpha(G)$.
For $k<\alpha(G)$, we would need to use the algorithm described in~\cite{DyGrMu}, which is an improvement of the approach
of that in~\cite{JS}. This method is based on the fact that the \emph{independence polynomial}
\begin{equation}
P_G(\lambda)=\sum_{i=0}^{\alpha(G)} \lambda^k\,W_k(G)\label{eq:indpoly}
\end{equation}
has only real negative roots when $G$ is claw-free. This was proved in~\cite{CS} for unit weights, and extended to general weights
in~\cite{DyGrMu}. However, the algorithm requires the reduction from approximate counting to approximate
random generation~\cite{JVV}. Therefore, we will not give further details here, but it is a
straightforward application of the above algorithm for $W(G)$.

However, for $\alpha=\alpha(G)$ the total weight $W_\alpha(G)$ of maximum independent sets, which correspond exactly to
perfect matchings in a graph when $G$ is a line graph, there is a simpler approach, which we will describe.
Note that this gives a complete generalisation of the result of~\cite{JSV}.

We have a multiplier $\lambda$ for every vertex weight, as in \eqref{eq:indpoly} above,
so $w(v)\gets \lambda w(v)$ for all $v\in V$. Then $W(G)$ becomes\\[-10pt]
{\addtolength{\jot}{-5pt}
\begin{align*}
W(\lambda G)&=\sum_{I\in\cI(G)}w(I)\,\lambda^{|I|}=\sum_{k=0}^{\alpha(G)} W_k(G)\,\lambda^k=P_G(\lambda)\;,
\intertext{Thus, if $\lambda \geq 1$ and $\alpha=\alpha(G)$,}
W_\alpha(G)\,\lambda^\alpha & \leq W(\lambda G) < W(G)\lambda^{\alpha-1} + W_\alpha(G)\,\lambda^\alpha
\intertext{so}
W_\alpha(G) & \leq W(\lambda G)/\lambda^\alpha < W_\alpha(G) + W(G)/\lambda\;.
\end{align*}}%
So we need $\lambda \geq W(G)/\varepsilon W_\alpha(G)$ to achieve relative error $\epsilon$. Let $w_{\min}\leq w(v)\leq w_{\max}$ for all $v\in V$. Then $W(G) \leq 2^n w_{\max}^\alpha$ and $W_\alpha(G)\geq w_{\min}^\alpha$, since $G$ has at most $2^n$ independent sets and at least one of size $\alpha$. Thus it suffices to take
$\lambda \geq 2^n (w_{\max}/w_{\min})^\alpha/\epsilon$ in order that $W(\lambda G)/\lambda^\alpha$ approximates $W_\alpha(G)$ with relative error $\varepsilon$. Since the time complexity of the algorithm is polynomial in $\log \lambda$, it is clearly polynomial in $n$ and the bit size of the $w(v)$'s, as required.

\section{Approximating $W(G)$ in (fork,\,odd hole)-free graphs}\label{sec:forkcount}

We will extend the result for claw-free graphs to fork-free graphs using modular decomposition, as described in
Section~\ref{sec:moddecomp}. Our algorithm is inspired by Lozin and Milani\v{c}'s~\cite{LozMil} approach to computing the
maximum weight independent set problem using modular decomposition. Again, we will first describe the modular
decomposition approach in a general context.

\subsection{Modular decomposition}\label{sec:moddecomp}

\emph{Modules} were first introduced by Gallai~\cite[Thm.\,3.1.2]{gallai,MaPr}, using different terminology.
If $S\subseteq V$ in $G=(V,E)$, we say that any vertex $x\in V\sm S$ \emph{distinguishes} $S$ if there exist
$u,v\in S$ with $ux\in E$, $vx\notin E$. Then a set $M \subseteq V$ is a \emph{module} of $G$ if no vertex of
$V\sm M$ distinguishes it. Alternatively, $M$ is a module if $\nb(u) \sm M = \nb(v) \sm M$ holds for all $u,v \in M$.
Thus $\es$, $V$ and all the singletons $\{v\}$ ($v \in V$) are modules of $G$. These are
the \emph{trivial} modules; all other modules are \emph{nontrivial}.

Another way of defining a module $M$ is that every vertex $v\in V\sm M$ must be either complete or anticomplete to $M$.
It follows that $M$ is a module in $G$ if and only if it is a module in $\overline{G}$, since $v$ is complete to $M$ in $G$
if and only if it is anticomplete to $M$ in $\overline{G}$. It is also easy to show that the modules of $G$ are closed
under intersection. However, if $M_1,M_2$ are modules, then $M_1\cup M_2$ is only guaranteed to be a module if
$M_1\cap M_2\neq\es$. Otherwise some $v$ could be complete to $M_1$ and anticomplete to $M_2$,
so neither complete nor anticomplete to $M_1\cup M_2$. Modules are also not generally closed under complementation,
since if $u\notin M$ is complete to $M$ and $v\notin M$ is anticomplete to $M$, then any vertex of $M$ distinguishes $V\sm M$.

\begin{obs}\label{obs:subgraph}
  If $M$ is a module of $G=(V,E)$ and $U \subseteq V$ then $M \cap U$ is a module of $G[U]$.
\end{obs}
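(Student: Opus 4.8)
The plan is to prove Observation~\ref{obs:subgraph} directly from the definition of a module via the distinguishing condition. Recall that $M$ is a module of $G=(V,E)$ precisely when no vertex of $V\sm M$ distinguishes $M$, equivalently when every vertex outside $M$ is either complete or anticomplete to $M$. I would establish that $M\cap U$ inherits this property as a subset of $G[U]=(U,E\cap U^{(2)})$.

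First I would fix an arbitrary vertex $x\in U\sm(M\cap U)$ and show it does not distinguish $M\cap U$ in $G[U]$. The key point is that $x\in U$ and $x\notin M\cap U$ together force $x\notin M$ (since $x$ lies in $U$, if it were in $M$ it would be in $M\cap U$). Thus $x$ is a vertex of $V\sm M$, and because $M$ is a module of $G$, the vertex $x$ is either complete or anticomplete to $M$ in $G$. Since $M\cap U\subseteq M$, the same vertex $x$ is then complete or anticomplete to $M\cap U$ in $G$. Finally, adjacency in $G[U]$ between $x\in U$ and vertices of $M\cap U\subseteq U$ agrees with adjacency in $G$, because induced subgraphs preserve edges among their vertices. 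Hence $x$ is complete or anticomplete to $M\cap U$ in $G[U]$ as well, so it does not distinguish $M\cap U$.

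Since $x$ was arbitrary, no vertex of $U\sm(M\cap U)$ distinguishes $M\cap U$ in $G[U]$, which is exactly the statement that $M\cap U$ is a module of $G[U]$.

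This argument is essentially bookkeeping rather than containing any genuine obstacle; the only point requiring a moment's care is the observation that $x\in U\sm(M\cap U)$ really does lie in $V\sm M$, so that the module property of $M$ in the full graph $G$ can be invoked. Everything else follows because both the complete/anticomplete dichotomy and the restriction to an induced subgraph behave monotonically under passing to the subset $M\cap U$ and the vertex set $U$.
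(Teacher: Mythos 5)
Your proof is correct and is essentially the paper's argument: the paper phrases it contrapositively (a vertex $x\in U\sm M$ distinguishing $M\cap U$ in $G[U]$ would already distinguish $M$ in $G$), while you argue directly via the complete/anticomplete dichotomy, but the mathematical content is identical.
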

\begin{proof}
  Otherwise, two vertices $u,v \in M \cap U$ distinguishable by $x \in U \sm M$
  belong to $M$, that is $u,v \in M$, and are distinguished by $x \in V \sm M$.
\end{proof}

If $M\ne\es$ is a module of $G$ then $G/M$ denotes the graph obtained from $G$ by
\emph{contracting} $M$ to a single vertex, with the same adjacencies in $G/M$ as all vertices in $M$.
We will label this vertex as $v_M$ in $G/M$. For $|M|\leq 1$ let $G/M=G$. Note that
\begin{obs}\label{obs:contraction}
  If $M$ is a module in $G\in\cC$, for some hereditary class $\cC$, then $G/M\in\cC$.
\end{obs}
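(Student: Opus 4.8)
The plan is to use the only property of $\cC$ that is available, namely that it is hereditary, hence closed under passing to induced subgraphs (and, as is standard for graph classes, under isomorphism). Accordingly, I would realize $G/M$ as an induced subgraph of $G$, so that membership in $\cC$ follows at once. The natural candidate is the subgraph obtained by retaining every vertex outside $M$ together with a single representative of $M$.

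First I would dispose of the trivial case $|M|\leq 1$: here $G/M=G$ by definition, so there is nothing to prove. For $|M|\geq 2$, I would fix an arbitrary $u\in M$ and set $H=G[(V\sm M)\cup\{u\}]$. The claim to establish is that the map $\varphi$ fixing each vertex of $V\sm M$ and sending the contracted vertex $v_M$ to $u$ is an isomorphism from $G/M$ onto $H$.

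The verification splits according to which edges are involved. Adjacencies internal to $V\sm M$ coincide automatically, since both $G/M$ and $H$ induce them from $G$; and there are no internal edges to reconcile at the image of $M$, since $v_M$ is a single vertex and only $u$ survives in $H$. The one substantive check is, for each $w\in V\sm M$, whether $w$ is adjacent to $v_M$ in $G/M$ precisely when $w$ is adjacent to $u$ in $H$. By the definition of contraction, $v_M$ carries the common external adjacency of $M$, so $wv_M$ is an edge of $G/M$ iff $w$ is complete to $M$. On the other side, $wu$ is an edge of $H$ iff $wu\in E$; and because $M$ is a module, $w$ is either complete or anticomplete to $M$, so $wu\in E$ iff $w$ is complete to $M$. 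The two conditions agree, and $\varphi$ is an isomorphism.

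The crux — and the only place the hypothesis is used — is exactly this last equivalence: the module property is what guarantees that a single representative $u$ faithfully encodes the external adjacency of all of $M$, which is precisely what makes the contraction well-defined and matches it to an induced subgraph. Once $G/M\cong H$ is in hand, heredity of $\cC$ gives $H\in\cC$, and closure under isomorphism yields $G/M\in\cC$. I expect no genuine obstacle; the argument is short, and the one point worth stating carefully is that ``hereditary'' is being invoked together with the (tacit) assumption that $\cC$ is closed under isomorphism.
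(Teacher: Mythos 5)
Your proposal is correct and follows essentially the same route as the paper, which observes that contracting $M$ amounts to deleting all but one of its vertices and relabelling the survivor, so that $G/M$ is isomorphic to an induced subgraph of $G$ and heredity applies. You simply spell out in more detail the verification (via the module property) that the single representative carries the right external adjacencies.
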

\begin{proof}
  Contracting $M$ deletes all but one of its vertices, and relabels the remaining vertex $v_M$.
  Since $\cC$ is a class of unlabelled graphs, $G/M\in\cC$ by heredity.
\end{proof}
If $M$ is a module of $G$, and $M'$ is a module of $G[M]$, then $M'$ is also a module of $G$. 

A module that does not overlap with any other module is strong \cite{Kelly},
more formally, $M \subseteq V$ is a \emph{strong module} of $G=(V,E)$ if $M$
is a module of $G$ and for all modules $M'$ of $G$ we have $M \subseteq M'$ or
$M \supseteq M'$ or $M \cap M' = \es$. Every trivial module of $G$ is
strong. $G$ is a \emph{prime graph} if every strong module of $G$ is trivial.
Like connectedness, primeness is an intrinsic property of the graph, but not 
a hereditary property. 
For example, none of the graphs in Fig.~\ref{fig:clawdiagem} is prime.

The strong modules of $G=(V,E)$ are partially ordered by set inclusion. The
unique top element of this poset is $V$, and $\es$ is the unique bottom element.
The layer above $\es$ consists of the singletons $\{v\}$ for all $v \in V$.
The strong modules in the next layer up are called \emph{prime modules} of $G$.
A prime module $M$ of $G$ induces a prime subgraph $G[M]$.

Let $M_1$ and $M_2$ be strong modules such that $M_1 \subset M_2$. We say
$M_2$ \emph{covers} $M_1$ if, for all strong modules $M$, $M_1 \subseteq M$
and $M \subseteq M_2$ imply $M_1=M$ or $M=M_2$. If $|V|>1$ then the strong
modules of $G=(V,E)$ except those of size at most one are the nodes of a tree
rooted at $V$, where the arcs of the tree are given by the cover relation.
Note that the singleton modules are often included as leaves in this tree, but
we will not do so here. See Fig.~\ref{fig:moduledecomposition}, where
$M_7=V$. We will call this the \emph{standard decomposition tree}. 
Equivalent definitions exist, see~\cite{HPsurvey}.
Several algorithms are known to compute the standard decomposition tree
in linear time, see for example~\cite{TCHP}.

We will use this tree in an equivalent form.
Let us number the modules $M_1,M_2,\ldots,M_h=V$, according to a \emph{postorder} on the standard tree.
This order places all the descendants of a vertex before the vertex itself, as in Fig.~\ref{fig:moduledecomposition}. (See, for example,~\cite[Ch.\,3]{Valiente}.)
We will call this the \emph{extended} decomposition tree.
Here $G_0=G$, and $G_i=G_{i-1}/\tilde{M}_i$ $(i\in[h])$, where $\tilde{M}_i$ is
$M_i$ after $M_1,\ldots,M_{i-1}$ have been contracted in order to single vertices.
We will denote this by $G_{i-1}=G/(M_1,M_2,\ldots,M_{i-1})$, and similarly
and hence $\tilde{M}_i=M_i/(M_1,M_2,\ldots,M_{i-1})$ is a module in $G_{i-1}$.
We can represent the extended decomposition tree as shown in
Fig.~\ref{fig:moduledecomposition}, where $G_{i-1}$ ($i\in[h]$) are the internal vertices, and
$\tilde{M}_1,\tilde{M}_2,\ldots,\tilde{M}_h$ and $G_h$ are the leaves.

Note that $\tilde{M}_i$ is a prime module in $G_{i-1}$, and in particular $G_{i-1}[\tilde{M}_i]$ is a prime graph.
Also $G_h$ is a single vertex, since $M_h=V$ has been contracted to a single vertex.
Note also that $\tilde{M}_i$ is isomorphic to the graph obtained by contracting only its children
in the standard tree, but its vertex weights require contracting the whole subtree of which it is the root.

Let $t_i=|\tilde{M}_i|$, so $2\leq t_i< n$. Then $|V(G_0)|=n$, and $|V(G_i)|=|V(G_{i-1})|-t_i+1$,
so $1=|V(G_h)|=n-\sum_{i=1}^h t_i+h\leq n-2h+h=n-h$. Thus $h\leq n-1$, so
the decomposition tree contains at most $(n-1)$ modules. Also $n-\sum_{i=1}^h t_i+h=1$ implies
$\sum_{i=1}^h t_i =n-h-1\leq n-2$. Thus the extended decomposition tree can be
represented explicitly in $O(n)$ space.

\begin{figure}
\begin{center}
\begin{tikzpicture}[line width=0.75pt,minimum size=0mm,xscale=1.3,yscale=1.5,font=\small]
\draw (0,0) node[circle,draw,inner sep=1pt] (M1) {$M_1$} (1,0) node[circle,draw,inner sep=1pt] (M2) {$M_2$}
(3.5,0) node[circle,draw,inner sep=1pt] (M5) {$M_5$} ;
\draw (0.5,1) node[circle,draw,inner sep=1pt] (M3) {$M_3$} (2,1) node[circle,draw,inner sep=1pt] (M4) {$M_4$}
(3.5,1) node[circle,draw,inner sep=1pt] (M6) {$M_6$} (2,2) node[circle,draw,inner sep=1pt] (M7) {$M_7$} ;
\draw (M7)--(M3)--(M2) (M7)--(M4) (M7)--(M6) (M3)--(M1) (M6)--(M5) ;
\end{tikzpicture}
\hspace*{2cm}
\begin{tikzpicture}[line width=0.75pt,minimum size=0mm,xscale=1,yscale=2.5,font=\small]
\draw
(0,1.7) node[circle,draw,inner sep=1pt] (G0) {$G_0$}+(0,-0.5) node[circle,draw,inner sep=0pt] (M1) {$\tilde{M}_1$}
(1,1.6) node[circle,draw,inner sep=1pt] (G1) {$G_1$}+(0,-0.5) node[circle,draw,inner sep=0pt] (M2) {$\tilde{M}_2$}
(2,1.5) node[circle,draw,inner sep=1pt] (G2) {$G_2$}+(0,-0.5) node[circle,draw,inner sep=0pt] (M3) {$\tilde{M}_3$}
(3,1.4) node[circle,draw,inner sep=1pt] (G3) {$G_3$}+(0,-0.5) node[circle,draw,inner sep=0pt] (M4) {$\tilde{M}_4$}
(4,1.3) node[circle,draw,inner sep=1pt] (G4) {$G_4$}+(0,-0.5) node[circle,draw,inner sep=0pt] (M5) {$\tilde{M}_5$}
(5,1.2) node[circle,draw,inner sep=1pt] (G5) {$G_5$}+(0,-0.5) node[circle,draw,inner sep=0pt] (M6) {$\tilde{M}_6$}
(6,1.1) node[circle,draw,inner sep=1pt] (G6) {$G_6$}+(0,-0.5) node[circle,draw,inner sep=0pt] (M7) {$\tilde{M}_7$}
(7,1) node[circle,draw,inner sep=1pt] (G7) {$G_7$};
\draw (G0)--(G1)--(G2)--(G3)--(G4)--(G5)--(G6)--(G7);
\draw (G0)--(M1) (G1)--(M2) (G2)--(M3) (G3)--(M4) (G4)--(M5) (G5)--(M6) (G6)--(M7);
\end{tikzpicture}
\end{center}
\caption{ A standard modular decomposition tree and its extended tree}\label{fig:moduledecomposition}
\end{figure}
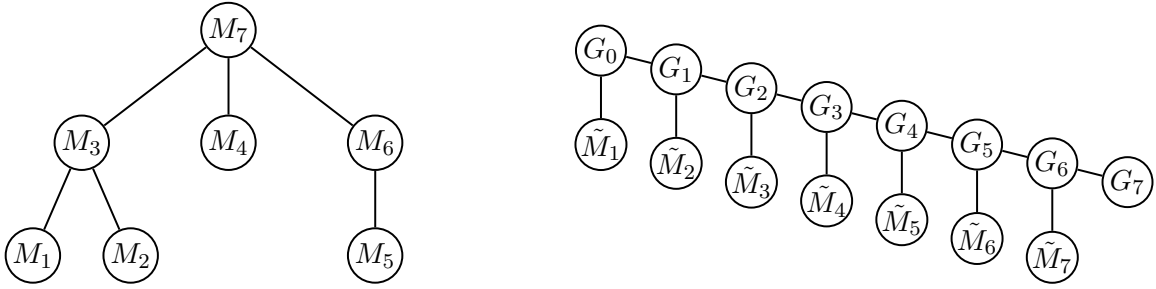

We can compute the extended tree from the standard tree in a further $O(n)$ time, using postorder tree traversal~\cite{Valiente}. We can contract modules and form the modules $\tilde{M}_i$, during the traversal. Since $\sum_{i=1}|\tilde{M_i}|<n$, the additional time complexity for contracting modules is also $O(n)$, so the total remains $O(m)$. Of course, this excludes the time to compute the vertex weight of $v_{M_i}$ in $G_i$, as will be detailed in Section~\ref{sec:modcount} below. However, these computations can also be integrated into the tree traversal. Performing the algorithm this way, the extended tree is purely a useful
notional device, and never computed explicitly.

Of course, we could compute the extended tree explicitly by successively finding a prime module and contracting it, until the contracted graph is prime. While this may be conceptually simpler, it is  computationally inefficient. Finding any prime module appears to be an $\Omega(m)$ computation, so the time complexity of producing the whole extended tree becomes $\Omega(mn)$. The inefficiency clearly results from discarding information gained in earlier searches when carrying out the later searches.

We will make use of the following.
\begin{obs}\label{obs:isomorphic}
  Each of the modules $\tilde{M}_i$ ($i\in[h]$) in the extended tree is isomorphic to a prime subgraph of $G$.
\end{obs}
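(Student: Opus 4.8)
The plan is to exhibit, for each $i\in[h]$, an explicit set $R_i\subseteq V$ of \emph{original} vertices such that $G[R_i]\cong G_{i-1}[\tilde{M}_i]$. Since it is already noted that $G_{i-1}[\tilde{M}_i]$ is a prime graph, this at once yields a prime induced subgraph $G[R_i]$ of $G$ isomorphic to $\tilde{M}_i$, which is exactly what the statement asserts. The whole proof is essentially the standard fact that a modular quotient is realised on any system of representatives of the contracted blocks.

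First I would pin down the structure of $G_{i-1}[\tilde{M}_i]$. As noted in the text, $\tilde{M}_i$ is isomorphic to the graph obtained from $G[M_i]$ by contracting exactly the children of $M_i$ in the standard tree. Writing $\Pi_i=\{P_1,\dots,P_t\}$ for the partition of $M_i$ into its maximal strong submodules (its tree children, together with the singleton children that are not drawn as tree nodes), this says $G_{i-1}[\tilde{M}_i]$ is the quotient $G[M_i]/\Pi_i$. The one point needing care is that the contractions of modules $M_j$ with $j<i$ that are \emph{not} contained in $M_i$ do not disturb $G_{i-1}[\tilde{M}_i]$: strong modules form a laminar family, so any such $M_j$ is disjoint from $M_i$, and contracting a set disjoint from $M_i$ leaves the induced subgraph on the image of $M_i$ unchanged. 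Hence $G_{i-1}[\tilde{M}_i]$ depends only on the contraction of the descendants of $M_i$, which by the postorder collapses each block of $\Pi_i$ to a single vertex.

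Next I would pass to representatives. For each block $P\in\Pi_i$ choose one vertex $r_P\in P$ and set $R_i=\{r_P:P\in\Pi_i\}$, so $|R_i|=|\tilde{M}_i|$. Each such $P$ is a (strong) module of $G$, singletons being trivial modules. For distinct blocks $P,Q$, the sets $P$ and $Q$ are disjoint modules, so every vertex of $Q$ is either complete or anticomplete to $P$; and since $Q$ is a module all its vertices behave alike, whence $P$ is either complete or anticomplete to $Q$. Therefore $r_Pr_Q\in E(G)$ if and only if $P$ is complete to $Q$, which is precisely the adjacency rule for the corresponding vertices $v_P,v_Q$ in the quotient $G[M_i]/\Pi_i$. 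The map $r_P\mapsto v_P$ is thus a graph isomorphism from $G[R_i]$ onto $G_{i-1}[\tilde{M}_i]$.

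Finally, since $G_{i-1}[\tilde{M}_i]$ is prime and primeness is invariant under isomorphism, $G[R_i]$ is a prime induced subgraph of $G$ isomorphic to $\tilde{M}_i$, as required. I expect the only real obstacle to be the bookkeeping of the first step, namely confirming that the intermediate contractions carried out in postorder before $M_i$ is reached reduce $G[M_i]$ to exactly the quotient by its children; this rests entirely on the laminarity of the family of strong modules, and once it is in place the representative argument is a routine consequence of the definition of a module (and of the uniform adjacency between disjoint modules).
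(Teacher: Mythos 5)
Your proof is correct and rests on the same idea as the paper's: the paper's argument simply iterates Observation~\ref{obs:contraction} (contracting a module amounts to deleting all but one of its vertices), which is exactly your "system of representatives" realisation of the quotient, and then invokes the already-noted primeness of $\tilde{M}_i$. You spell out the laminarity and the complete/anticomplete adjacency between disjoint modules explicitly, but the substance is identical to the paper's (much terser) proof.
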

\begin{proof} Since $\tilde{M}_i$ is $M_i$ with all its submodules contracted, it follows from Observation~\ref{obs:contraction}, that $\tilde{M}_i$ is isomorphic to some subgraph $\tilde{M}'_i$ of $G$. Note that $\tilde{M}'_i$ is not unique. As observed above, each of the $\tilde{M}_i$ is prime, so $\tilde{M}'_i$ is also prime. Since they are vertex-weighted graphs, the isomorphism  between $\tilde{M}_i$ and $\tilde{M}'_i$ is in the sense defined in Section~\ref{sec:prelim}.
\end{proof}

\subsection{Approximating $W(G)$ using modular decomposition}\label{sec:modcount}

Let $\cC$ be a hereditary class, and $\cP\subseteq \cC$, the (non-hereditary) class of prime graphs in $\cC$. Let $T_{\cC}(n)$ be the time to compute $W(G)$ for any connected $n$-vertex graph $G\in\cC$, and let $T_{\cP}(n)$ bound the time to compute $W(G)$ for any $n$-vertex prime graph $G\in\cC$. We may assume that $T_{\cP}(n)$ is a monotonically increasing function that is linear or convex. We will show $T_{\cC}(n)\leq T_{\cP}(n)+O(m)$. This strengthens the result of~\cite[Thm.\,1]{LozMil}, with an easier proof.

We use the notation of Section~\ref{sec:moddecomp}. We construct the extended decomposition tree as described in Section~\ref{sec:moddecomp}, and
begin with $G_0=G$. At step $i$, we contract the module $\tilde{M_i}$
in $G_{i-1}$ to give $G_i$, 
giving $v_{\tilde{M_i}}$, the vertex that represents $\tilde{M_i}$ in $G_i$, weight $W(\tilde{M_i})$. Then
$W(G_i)=W(G_{i-1})$,
since the set $\tilde{M_i}$ has the same neighbourhood
in $G_{i-1}$ as the vertex $\tilde{M_i}$ in $G_i$.
Thus, by induction, $W(G)=W(G_0)=W(G_h)=w(v)$ for the unique vertex
$v \in V(G_h)$.

If $h=1$, $T_{\cC}(n)= T_{\cP}(n)$. Otherwise, $2\leq h\leq n-1$ so, omitting the time to compute the modular decomposition,
we have
\begin{align*}
  T_{\cC}(n)\,\leq &\, \max \big\{ \textstyle\sum_{i=1}^h T_{\cP}(t_i) :
  \textstyle\sum_{i=1}^h t_i=n+h, 2\leq t_i\leq n, i\in [h]\big\} \\
    \leq &\, T_{\cP}(n-h+2)+(h-1)T_{\cP}(2),\\
    \leq &\, T_{\cP}(n)+(n-2)T_{\cP}(2)\,=\, T_{\cP}(n)+O(n),
\end{align*}
where the second line follows from the first since $T_{\cP}$ is convex, so $\sum_{i=1}^h T_{\cP}(t_i)$
is maximised by setting $t_1=n-h+2$, $t_i=2$, $i=2,\ldots,h$. The third line follows from the second
because $T_{\cP}$ is increasing, so $T_{\cP}(n-h+2)\leq T_{\cP}(n)$ for $h\geq 2$.
Adding the $O(m)$ time to compute the modular decomposition~\cite{TCHP}, we have $T_{\cC}(n) \leq T_{\cP}(n)+O(m)$.
Thus we can approximate $W(G)$ in any graph in $\cC$, with only an $O(m)$ overhead, if we can approximate it in all the prime graphs in $\cC$.

Note that this analysis deals only with applications of the algorithm for $\cP$, as does that in~\cite[Thm.\,1]{LozMil}. It ignores the effect of the bit-size of the vertex weights on $T_{\cP}(n)$. This distinction is not so important for optimisation, but is much more important for counting, since contracting modules can cause exponential growth in the weights. However, we will not pursue this issue further.

\subsection{Structure of fork-free graphs}
Lozin and Milani\v{c}~\cite{LozMil} used a modular decomposition approach to determine the maximum weight independent
set in a fork-free graph. However, there seems to be a flaw in their algorithm and its analysis.
Consequently, we will re-work most of their development, in addition to extending it from optimisation to counting.

The approach of~\cite{LozMil} is based on a structural result given in ~\cite[Thm.\,3]{LozMil}.
We begin with a more useful version of this theorem, the original being too weak for its application.
We first repeat two structural lemmas from~\cite{LozMil}. We will also make use of the following simple
observation, which was used as the basis of the algorithm in~\cite{TCHP}.
\begin{obs}
If $v$ is any vertex of $G$, and $M$ is a module not containing $v$,
then either $M\subseteq \nb(v)$ or $M\subseteq V\sm \nb[v]$.\label{obs3}
\end{obs}
\begin{proof}
  Otherwise, $v$ distinguishes $M$, contradicting it being a module.
\end{proof}
\begin{lemma}[see~\cite{LozMil}, Thm.\,3 and also \cite{BHV04}]\label{lem:LM0}
If a prime fork-free graph contains a claw, then it contains one of the graphs $H_1,\ldots,H_5$
(see Fig.~\ref{fig:H1-5}).\qed
\end{lemma}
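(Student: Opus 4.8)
The plan is to start from a fixed claw inside the prime fork-free graph $G$ and use fork-freeness to severely restrict how the remaining vertices attach to it, then invoke primeness to force the appearance of a distinguishing vertex that completes one of the five target configurations. Write the claw as a centre $a$ together with pairwise non-adjacent leaves $b_1,b_2,b_3$, each adjacent to $a$. Note first that the claw on its own is \emph{not} prime: $\{b_1,b_2,b_3\}$ is a module of $G[\{a,b_1,b_2,b_3\}]$. Hence, since $G$ is prime and therefore has no nontrivial module, the set $\{b_1,b_2,b_3\}$ is not a module of $G$, and as $a$ does not distinguish it, there must exist a vertex $x\notin\{a,b_1,b_2,b_3\}$ that distinguishes two leaves, say $x\sim b_1$ and $x\not\sim b_2$. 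Securing such an external witness is exactly what the primeness hypothesis buys us.

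The workhorse of the argument is that fork-freeness forbids most attachment patterns, so I would classify every vertex $x$ by its adjacencies within $\{a,b_1,b_2,b_3\}$; up to permuting the leaves this leaves only a handful of patterns. Each elimination is a five-vertex check: for instance, if $x\not\sim a$ and $x$ is adjacent to exactly one leaf $b_1$, then $\{a,b_1,b_2,b_3,x\}$ induces a fork (the claw centred at $a$ with pendant $x$ on $b_1$), which is forbidden. Running through the patterns this way confines an external vertex to being complete to $\{b_1,b_2,b_3\}$, adjacent to $a$ together with a controlled set of leaves, or adjacent to exactly two leaves, and so on.

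Combining the surviving patterns with the distinguishing vertex $x$ produced above pins down how $x$ sits relative to the claw, and in most branches $\{a,b_1,b_2,b_3,x\}$ already induces one of $H_1,\dots,H_5$. Where a residual module survives on $\{a,b_1,b_2,b_3,x\}$, primeness again supplies a second distinguishing vertex, whose admissible patterns are likewise constrained by the fork-free eliminations; a finite case analysis then terminates in one of the five graphs. The bookkeeping here is the same as in~\cite{LozMil,BHV04}, whose enumeration I would follow.

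The main obstacle is the passage from the global hypothesis of primeness to a \emph{local} witness: primeness asserts only that no set is a module, and one must argue that the distinguishing vertices can always be chosen close enough to the claw that $\{a,b_1,b_2,b_3\}$ together with one or two of them already completes some $H_i$, rather than forcing an unboundedly large configuration. The delicate point is guaranteeing completeness of the enumeration, so that no admissible prime, fork-free attachment pattern is overlooked; the fork-free eliminations are precisely what keep the number of surviving cases finite and small.
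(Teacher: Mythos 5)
The paper does not prove this lemma at all: it is imported from Lozin and Milani\v{c} (Theorem~3 of \cite{LozMil}) and from Brandst\"adt, Ho\`ang and Vanherpe \cite{BHV04}, which is why the statement is given with no argument following it. So the only question is whether your sketch would stand as a self-contained proof, and it would not. Your opening move is sound and is indeed the standard one: the three leaves of a claw form a module of the claw itself, so primeness of $G$ forces a vertex distinguishing them, and fork-freeness kills most attachment patterns (your elimination of ``adjacent to exactly one leaf and not to the centre'' is correct). But the decisive content of the lemma is the finite enumeration showing that this process terminates in one of exactly five graphs, and you explicitly defer that enumeration to the very references the paper cites, so nothing beyond the setup has actually been established.

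There is also a concretely false step: you assert that ``in most branches $\{a,b_1,b_2,b_3,x\}$ already induces one of $H_1,\dots,H_5$''. It never does --- every $H_i$ has at least six vertices ($H_5$ has seven) --- so a single distinguishing vertex can never complete the configuration; at least two further vertices are always needed, and three for $H_5$. Consequently the ``residual module, second distinguishing vertex'' step that you treat as an exceptional fallback is in fact the generic case, and the point you correctly identify as the main obstacle --- showing that the iterated search for distinguishing vertices closes off after a bounded number of steps rather than producing an unboundedly large configuration --- is precisely the part of the proof that is missing. The clean way to organise this is to take a vertex-minimal prime induced subgraph of $G$ containing the claw (one exists, since $G$ itself is prime and contains the claw) and then show that fork-freeness forces any such minimal prime extension of the claw to be one of $H_1,\dots,H_5$; that second step is the case analysis of \cite{BHV04}, and it is the whole lemma.
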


\begin{figure}[htbp]
  \newcommand{\LMH}[2]{
    \hspace*{\fill}
    \begin{tikzpicture}[scale=0.6]
      \node[w] (a) at (0,5) {}; \node[w] (b) at (2,5) {};
      \node[w] (c) at (0,3) {}; \node[w] (d) at (2,3) {};
      \node[w] (e) at (0,1) {}; \node[w] (f) at (2,1) {};
      \node    (H) at (1,0) {$H_{#1}$};
      \draw[black, very thick] (a)--(c)--(e)  (c)--(d);
      \draw (a)--(b)--(d)--(f)--(e);
      #2
    \end{tikzpicture}
  \hspace*{\fill}
  }
  \LMH{1}{}
  \LMH{2}{\draw (b) to[bend left] (f);}
  \LMH{3}{\draw (c)--(b);}
  \LMH{4}{\draw (b) to[bend left] (f); \draw (c)--(b);}
  \LMH{5}{\node[w] (g) at (4,3) {};
          \draw (b) to[bend left] (f); \draw (c)--(b)--(g)--(f)--(c);}
  \caption{The five minimal fork-free prime graphs extending a claw.}
  \label{fig:H1-5}
\end{figure}

\begin{lemma}[\cite{LozMil}, Lemma 1]\label{lem:LM1}
Let $G$ be a fork-free graph and let $v$ be any vertex of $G$. Assume that
$H\in \{ H_1, \ldots ,H_5\}$ is an induced subgraph of $\Gv$.
Then no neighbour of $v$ distinguishes $V(H)$.\qed
\end{lemma}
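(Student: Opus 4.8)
The plan is to argue by contradiction: assume some neighbour $x$ of $v$ distinguishes $V(H)$, and exhibit an induced fork (see Fig.~\ref{fig:clawdiagem}) inside $G[V(H)\cup\{v,x\}]$, contradicting that $G$ is fork-free. Two features are fixed throughout. Since $V(H)\subseteq V\sm\nb[v]$, the vertex $v$ is anticomplete to $V(H)$ and $v\neq x$, so in $G[V(H)\cup\{v,x\}]$ the only edge incident with $v$ is $vx$; thus $v$ is perfectly placed to be the pendant (tail) of a fork, and every non-edge from $v$ to a vertex of $V(H)$ holds automatically. Also, each of $H_1,\dots,H_5$ is connected and contains the claw with centre $c$ and leaves $a,d,e$ (see Fig.~\ref{fig:H1-5}).

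First I would locate a \emph{boundary edge}. Writing $S=\nb(x)\cap V(H)$, the hypothesis that $x$ distinguishes $V(H)$ says exactly that $\es\neq S\subsetneq V(H)$. Since $H$ is connected, tracing a path of $H$ from a vertex of $S$ to a vertex outside $S$ yields an edge $pq\in E(H)$ with $p\in S$ and $q\notin S$, that is, $xp\in E$ and $xq\notin E$. Together with $vx\in E$ and the non-edges $vp,vq,xq$, the set $\{v,x,p,q\}$ induces the path $v$--$x$--$p$--$q$.

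It remains to adjoin a fifth vertex of $V(H)$ to turn this $P_4$ into a fork, and I would show that one of two completions is always available. \emph{Completion at $x$:} if some $z\in S$ satisfies $z\notin\nb(p)$ and $z\notin\nb(q)$, then $\{v,x,z,p,q\}$ is a fork with centre $x$ (neighbours $v,z,p$) and pendant $q$ attached to $p$. \emph{Completion at $p$:} if some $r\in\nb(p)\cap V(H)$ satisfies $r\notin S$ and $r\notin\nb(q)$, then $\{v,x,p,q,r\}$ is a fork with centre $p$ (neighbours $x,q,r$) and pendant $v$ attached to $x$. In both cases the remaining required non-edges hold because $v$ is anticomplete to $V(H)$ and the claw leaves are pairwise non-adjacent. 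The lemma thus reduces to the purely local claim: for each $H_i$ and each proper non-empty trace $S$, a suitable $z$ or $r$ exists for an appropriately chosen boundary edge $pq$.

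I expect this final step to be the main obstacle, and it is precisely where the specific graphs $H_1,\dots,H_5$ are needed rather than a mere claw: a vertex can be adjacent to the centre and to two leaves of a claw while missing the third leaf \emph{without} creating any fork, so the claw by itself can be distinguished. I would therefore carry out a finite case analysis organised by the trace of $x$ on the claw $\{c,a,d,e\}$ and on the remaining vertices $b,f$ (and $g$ in $H_5$), which are exactly the vertices carrying the extra edges $bf$, $cb$ and the $g$-edges that separate the five graphs. The \emph{sparse} regime, where $S$ is small, is settled by completion at $p$, taking for $r$ a claw leaf non-adjacent to $q$. The delicate \emph{dense} regime, where $x$ sees most of $V(H)$, is where $r$ may be unavailable and one must instead use completion at $x$, choosing the boundary edge so that some $z\in S$ is simultaneously non-adjacent to both ends $p,q$ (for instance, in $H_1$ with $S\supseteq\{c,a,d,e\}$ and $q\in\{b,f\}$, a suitable leaf plays the role of $z$). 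The case $H=H_5$, with its seventh vertex, demands the most care in verifying that the dense sub-cases are exhausted.
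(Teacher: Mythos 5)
The paper itself gives no proof of this lemma: it is quoted verbatim from Lozin and Milani\v{c} \cite{LozMil} and stated with a \emph{qed} mark, so the only thing to compare your attempt against is the cited source, whose proof is precisely the kind of case analysis you defer.

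Your setup is sound: taking $x\in\nb(v)$ with $\es\neq S=\nb(x)\cap V(H)\subsetneq V(H)$, extracting a boundary edge $pq$ of $H$ with $p\in S$, $q\notin S$, noting that $v$ is anticomplete to $V(H)$, and identifying the two induced-fork patterns through $v$ (claw centred at $x$ with leaves $v,z,p$ and pendant $q$ on $p$; claw centred at $p$ with leaves $x,q,r$ and pendant $v$ on $x$) is exactly the right machinery, and both patterns are correctly verified to be induced forks. But the proof stops where the mathematical content begins. The lemma is equivalent to the finite claim that for every $H_i$ and every proper nonempty trace $S$ on $V(H_i)$, some boundary edge admits one of your two completions, and you neither carry out this verification nor establish it by any structural argument. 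This is a genuine gap, not a routine omission: as you yourself observe, the claw alone \emph{can} be distinguished without creating a fork, so the claim really does depend on the extra vertices and edges of $H_1,\dots,H_5$, and nothing in your write-up rules out a trace $S$ for which both completions fail for every boundary edge (in which case one would have to find a fork in which $v$ plays no role, or in which $x$ sits elsewhere --- configurations your scheme does not cover). Until the case analysis is executed, or replaced by an argument showing the two completions always suffice, the proof is a plan rather than a proof.
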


Note that we could omit $H_2$ for our application, because it contains a 5-hole, but we give the result for general claw-free graphs, as used in~\cite{LozMil}, since what follows could be used for computing a maximum weight independent set in a claw-free graph.

The statement and proof of the following theorem modifies the weaker result of ~\cite[Thm.\,3]{LozMil}.

\begin{thm}\label{thm:prime}
Let $G$ be a fork-free graph and $v$ a vertex of $G$.
If $G$ is prime and $M$ is a prime subgraph of $\Gv$, then $M$ is claw-free.
\end{thm}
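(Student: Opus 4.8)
The plan is to argue by contradiction: suppose $G$ is a prime fork-free graph, $v$ is a vertex, and $M$ is a prime subgraph of $\Gv[v]$ (where $\Gv[v]=G\sm\nb[v]$) that contains a claw. Since $M$ is prime and fork-free (as an induced subgraph of $G$, which is fork-free), Lemma~\ref{lem:LM0} applies to $M$ directly: because $M$ is prime and contains a claw, it must contain one of the graphs $H_1,\ldots,H_5$ as an induced subgraph. So there is an induced copy of some $H\in\{H_1,\ldots,H_5\}$ sitting inside $M$, and hence inside $\Gv[v]$, meaning $V(H)$ is disjoint from $\nb[v]$.

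Now I would invoke Lemma~\ref{lem:LM1}: since $G$ is fork-free and $H$ is an induced subgraph of $\Gv[v]$, no neighbour of $v$ distinguishes $V(H)$. The vertex $v$ itself is anticomplete to $V(H)$ (as $V(H)\subseteq V\sm\nb[v]$), so $v$ does not distinguish $V(H)$ either. The remaining vertices of $G$ fall into two groups: those in $V(H)$, and those outside $\nb[v]\cup V(H)$ and non-adjacent to $v$. The key step is to upgrade ``no neighbour of $v$ distinguishes $V(H)$'' into the statement that $V(H)$ (or some slightly enlarged set built from it) is a \emph{module} of $G$, which would contradict primeness. The natural candidate is to show that the set $\nb(v)\cup V(H)$, or perhaps a carefully chosen module generated by $V(H)$, is a nontrivial strong module.

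The main obstacle will be handling the vertices that are neither neighbours of $v$ nor in $V(H)$: Lemma~\ref{lem:LM1} only controls neighbours of $v$, so I must separately argue that a non-neighbour $x$ of $v$ cannot distinguish the relevant set. By Observation~\ref{obs3}, any module not containing $x$ behaves coherently with respect to $x$, but here the difficulty is running the argument in the correct direction. I expect the cleanest route is to take $M'$ to be a module of $G$ that is \emph{maximal} subject to $M'\supseteq V(H)$ and $M'\subseteq V\sm\nb[v]$ (using that modules are closed under union when they intersect, and that $V(H)$ sits inside a module-friendly region), then argue that the only way $M'$ fails to be a strong module is if some vertex of $\nb(v)$ distinguishes it — which Lemma~\ref{lem:LM1} forbids — or some outside non-neighbour distinguishes it, which I rule out by the anticomplete structure. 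Producing a genuinely nontrivial module contradicts the hypothesis that $G$ is prime, completing the proof.

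The delicate point throughout is bookkeeping the three-way partition of $V\sm V(H)$ into $\{v\}\cup\bigl(\nb(v)\bigr)$ and the remaining non-neighbours of $v$, and verifying that each class is uniformly complete or anticomplete to the module I construct; Lemma~\ref{lem:LM1} does the heavy lifting for $\nb(v)$, while $v$ and the far non-neighbours should follow from the fact that $V(H)\subseteq \Gv[v]$ together with Observation~\ref{obs3} applied to any pre-existing module structure. I anticipate that correctly identifying which set to promote to a module — rather than the routine distinguishing checks — is where the real work lies.
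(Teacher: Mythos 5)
Your opening moves match the paper's: since $M$ is prime and fork-free and contains a claw, Lemma~\ref{lem:LM0} yields an induced $H\in\{H_1,\dots,H_5\}$ inside $\Gv$, and Lemma~\ref{lem:LM1} partitions $\nb(v)$ into a part complete to $V(H)$ and a part anticomplete to it. But the core of the argument is missing, and the route you sketch for it does not work. Your plan is to ``rule out'' distinguishing vertices: neighbours of $v$ via Lemma~\ref{lem:LM1}, and non-neighbours of $v$ ``by the anticomplete structure'' and Observation~\ref{obs3}. The second half of this is false as stated: a vertex $u\notin\nb[v]$ certainly can distinguish $V(H)$ --- indeed, since $M$ is prime, $V(H)$ is not even a module of $M$, so other vertices of $M$ already distinguish it. Lemma~\ref{lem:LM1} says nothing about non-neighbours of $v$, and Observation~\ref{obs3} gives you nothing here because you have no pre-existing nontrivial module to apply it to ($G$ is prime). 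Likewise, your candidate ``a module of $G$ maximal subject to containing $V(H)$ and avoiding $\nb[v]$'' is vacuous: $G$ being prime, no such nontrivial module exists, which is the conclusion you are trying to reach, not a tool you can invoke.

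The missing idea is a \emph{growing} argument. The paper takes $W$ to be a maximal set of vertices of $\Gv$ containing $V(H)$ such that $G[W]$ and $\overline{G[W]}$ are both connected, $Z$ is complete to $W$ and $Y$ is anticomplete to $W$ (where $Z,Y$ are the two parts of $\nb(v)$). Primeness forces some $u\notin W$ to distinguish $W$, and the work is to show $W\cup\{u\}$ again satisfies all five conditions, contradicting maximality. That verification is exactly where fork-freeness is used twice more, in ways your sketch never touches: if $u$ had a non-neighbour $z\in Z$, then co-connectedness of $G[W]$ lets one pick nonadjacent $w_1,w_2\in W$ distinguished by $u$, and $\{u,w_1,z,w_2,v\}$ is a fork; if $u$ had a neighbour $y\in Y$, a shortest path from $V(H)$ to $u$ in $G[W\cup\{u\}]$, extended through $y$ to $v$, produces a chordless path whose first few vertices together with two nonadjacent vertices of $H$ form a fork. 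The connectivity and co-connectivity conditions on $W$ --- absent from your proposal --- are precisely what make these two fork constructions possible, so they are not optional bookkeeping but the load-bearing part of the proof.
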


\begin{proof}
Assume by contradiction that $M$ contains a claw.
Then by Lemma \ref{lem:LM0}, $M$ contains $H$, one of the graphs $H_1, \ldots ,H_{5}$.
Hence, by Lemma \ref{lem:LM1}, $\nb(v)$ can be partitioned into sets $Y$ and $Z$, such that
$Y$ is anticomplete to $H$ and $Z$ is complete to $H$.

Let $W$ be an (inclusionwise) maximal subset of vertices of $\Gv$ satisfying the following properties:
\begin{itemize}[topsep=0pt,itemsep=0pt]
\item[(i)] $V(H)\subseteq W$,
\item[(ii)] $G[W]$ is connected,
\item[(iii)] $\overline{G[W]}$ is connected,
\item[(iv)] $Z$ is complete to $W$,
\item[(v)] $Y$ is anticomplete to $W$.
\end{itemize}

Note that such a set $W$ exists since $V(H)$ satisfies all these properties.
Clearly, $5<|V(H)|\leq |W|< |V(G)|$.
Since $G$ is prime, $W$ cannot be a nontrivial module of $G$, and hence some $u\in V(G)\sm W$
distinguishes $W$. Note that $u\not=v$. We will obtain a contradiction (to the existence of $W$) by showing that the set $W'=W\cup \{ u\}$ also satisfies
properties (i)--(v).

Since $W$ satisfies (iv) and (v), $u\in V\sm \nb(v)$ and hence $W'\subseteq V\sm \nb(v)$.
Clearly $W'$ satisfies~(i).
Since $W$ satisfies (ii) and (iii), and since $u$ has both a neighbour and a non-neighbour in $W$,
it follows that $W'$ satisfies (ii) and (iii).

Suppose that $u$ has a non-neighbour $z\in Z$.
Since $u$ distinguishes $W$ and $\overline{G[W]}$ is connected, $u$ distinguishes a pair of nonadjacent vertices
$w_1,w_2\in W$. But then $\{ u,w_1,z,w_2,v\}$ induces a fork, a contradiction. Therefore, $Z$ is complete to $W'$ and hence $W'$ satisfies (iv).

Finally, suppose that $u$ has a neighbour $y\in Y$. Since $G[W]$ is connected, and $u$ distinguishes $W$, there is a shortest path $P=(v_0,\dots,v_k)$ connecting $V(H)$ and $u$ in $G[W']$ with
$v_0\in V(H)$ and $v_k=u$. Let $v_{k+1}=y$ and $v_{k+2}=v$.
Note that $v$ is anticomplete to $V(P)$, and $y$ is anticomplete to $V(P)\sm\{v_k\}$,
and hence $(v_0,\dots,v_{k+2})$ is a chordless path.
Since $v_2$ has no neighbour in $H$, by Lemma \ref{lem:LM1}, $v_1$ is complete to $V(H)$.
But then any two nonadjacent vertices of $H$, together with $v_1,v_2,v_3$ induce a fork in $G$,
 a contradiction. Therefore, $Y$ is complete to $W'$ and hence $W'$ satisfies (v).
\end{proof}

Note that Theorem~\ref{thm:prime} cannot obviously be strengthened. If $G$ is a fork-free prime graph and $v$ a vertex of $G$, then $\Gv$ is not necessarily claw-free nor prime.
Consider, for example, a $n\times n$ complete bipartite graph
with a perfect matching removed. This is easily shown to be fork-free and prime. The case $n=4$ is shown in Fig.~\ref{fig:badcase}.
The graph $\Gv$ for the vertex labelled $v$ is also shown. This graph is clearly neither claw-free nor prime.
Consequently, Theorem 3 of~\cite{LozMil} is inapplicable, even in this simple case.
\begin{figure}[tbh]
\centering{
\begin{tikzpicture}[line width=0.5pt,minimum size=1.5pt,xscale=1,yscale=1]
\draw (-90:2) node[w] (0) {}  (-45:2) node[b] (1) {}  (0:2) node[w] (2) {} (45:2) node[b] (3) {}
(90:2) node[w,label=above:$v$] (4) {}  (135:2) node[b] (5) {}  (180:2) node[w] (6) {}  (225:2) node[b] (7) {} ;
\draw (0)--(1)--(2)--(3)--(4)--(5)--(6)--(7)--(0);
\draw (0)--(5) (2)--(7) (4)--(1) (3)--(6)  ;
\end{tikzpicture}\hspace*{3cm}
\begin{tikzpicture}[line width=0.5pt,minimum size=1.5pt,xscale=1,yscale=1]
\draw (-90:2) node[w] (0) {}  (0:2) node[w] (2) {}   (180:2) node[w] (6) {}  (225:2) node[b] (7) {} ;
\draw (6)--(7)--(0) (2)--(7) ;;
\end{tikzpicture}\vspace{5pt}
}
\caption{Graph $G$ and a derived $\Gv$}\label{fig:badcase}
\end{figure}

\subsection{Approximating $W(\Gv)$ for prime $G$ and $v\in V$}\label{sec:W(G_v)}\label{sec:Gv}

To apply Theorem~\ref{thm:prime}, we need the following strengthening.

\begin{corollary}\label{cor:extended}
Let $G$ be prime and $v$ be a vertex of $G$.  
The  modules $\tilde{M}_i$ ($i\in[h]$) in the extended decomposition tree for $\Gv$ are claw-free.
\end{corollary}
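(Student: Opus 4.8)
The plan is to read this off directly from the two results immediately preceding it, namely Observation~\ref{obs:isomorphic} and Theorem~\ref{thm:prime}, so the proof should be short. First I would apply Observation~\ref{obs:isomorphic} with $\Gv$ playing the role of the base graph: the extended decomposition tree in the corollary is the one built for $\Gv$, so each module $\tilde{M}_i$ ($i\in[h]$) is isomorphic to some prime subgraph $M$ of $\Gv$, where the isomorphism is that of vertex-weighted graphs in the sense of Section~\ref{sec:prelim}.

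Next I would invoke Theorem~\ref{thm:prime}. The ambient graph $G$ is prime, and since we are in the (fork,\,odd hole)-free setting of this section it is in particular fork-free, so the hypotheses of the theorem are satisfied. Because $M$ is a prime subgraph of $\Gv$, the theorem yields that $M$ is claw-free. Finally, claw-freeness is a property of the underlying unweighted graph — it is invariant under graph isomorphism and ignores vertex weights — so it transfers across the weighted isomorphism between $\tilde{M}_i$ and $M$. Hence each $\tilde{M}_i$ is claw-free, as required.

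The only points needing a moment's care, and the closest thing here to an obstacle, are purely verifications rather than genuine difficulties: one must confirm that Observation~\ref{obs:isomorphic}, stated for a generic base graph, applies verbatim when the base graph is $\Gv$ (it does, since the modular decomposition machinery of Section~\ref{sec:moddecomp} applies to any graph on more than one vertex), and that the weighted notion of isomorphism used there does not interfere with the structural conclusion (it does not, since being claw-free depends only on the unweighted graph). With those two observations in place, the corollary follows immediately by composing Observation~\ref{obs:isomorphic} and Theorem~\ref{thm:prime}.
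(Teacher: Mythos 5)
Your proof is correct and follows exactly the paper's own argument: the paper likewise derives the corollary directly from Observation~\ref{obs:isomorphic} applied to $\Gv$ combined with Theorem~\ref{thm:prime}. Your additional remarks about the weighted isomorphism not affecting claw-freeness are accurate but not needed beyond what the paper states.
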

\begin{proof}
This follows directly from Observation~\ref{obs:isomorphic} and Theorem~\ref{thm:prime}.
\end{proof}
To determine $W(G)$ for a prime graph, we first show how to determine $W(\Gv)$ for any $v\in V$.
Let $G_v$ denote $\Gv$.  As we have seen, $G_v$ is not prime in general, so we must approximate $W(G_v)$ using the modular decomposition approach of Section~\ref{sec:moddecomp}.

The algorithm is then as follows. We construct the extended decomposition tree for $G_v$, with modules $\tilde{M_i}$ ($i\in[h]$). For each $i=1,2,\ldots,h$, we determine $W(G_{i-1}[\tilde{M_i}])$, using this as the weight for $v_{\tilde{M}_i}$ in $G_i$. From Corollary~\ref{cor:extended}, $G_{i-1}[\tilde{M_i}]$ is claw-free, so we may use the algorithm of Section~\ref{sec:clawoddholefree} in this computation. Finally $W(G_v)=w(u)$, where $u$ is the unique vertex in $G_h$.

More generally, suppose $G\in\cC$ for some hereditary class $\cC$ and, given $v\in V$, all prime subgraphs of $G_v$ are in some smaller hereditary class $\cA$. Then we can use this method to approximate $W(G_v)$ for graphs in $\cC$, using  modular decomposition and an algorithm for approximating $W$ for graphs in $\cA$. In our application $\cC=$ (fork,\,odd hole)-free and $\cA=$ (claw,\,odd hole)-free.

\subsection{Approximating $W(G)$ for prime $G$}\label{sec:partition}

The algorithm described in Section~\ref{sec:W(G_v)} approximates $W(\Gv)$ for prime $G$ and $v\in V$.
Let $S(v)=\{I\in \cI(G): v\in I\}$. Then $w(v)\,W(\Gv)=\sum_{I\in S(v)}w(I)$,
the total weight of all independent sets containing $v$. The classes $\cC$, $\cA$
are as in Section~\ref{sec:W(G_v)}.

We can write $V=\{v_1,v_2,\ldots,v_n\}$ and determine $w(v_i)\,W(\Gv[v_i])$ for $i\in[n]$,
similarly to~\cite{LozMil}, but the sum of these greatly overestimates $W(G)$, since
$\{S(v_i): i\in[n]\}$ is a cover of $\cI(G)$, not a partition.

Let $V_i=\{v_i,\dots,v_n\}$  and $S'(v_i)=\{I\in\cI(G): v_i\in I \text{~and~} I \subseteq V_i\}$. The sets $\{S'(v_i): i\in[n]\}$ form a partition of $\cI(G)\sm\{\es\}$, and so
\begin{equation}
W(G) = 1+\sum_{i=1}^n\sum_{I\in S'(v_i))}w(I)  = 1+\sum_{i=1}^n w(v_i)\,W(\GVN{i})\;.\label{eq:partition}
\end{equation}
So we must approximate $W(\GVN{i})$ for $i\in[n]$. We do this by constructing the extended decomposition tree for $\Gv[v_i]$ with leaf modules $\tilde{M}_1,\tilde{M}_2,\ldots,\tilde{M}_h$, as in Section~\ref{sec:W(G_v)}. From Corollary~\ref{cor:extended} we know that the modules $\tilde{M}_1,\tilde{M}_2,\ldots,\tilde{M}_h$ in this decomposition are in $\cA$. We transform this extended decomposition tree of $\Gv[v_i]$ into an extended decomposition tree for $\GVN{i}$.
For a fixed $i\in[n]$ we take $G'_0 = \GVN{i}$ and for $j\in[h]$ we set $G'_j = G_j\sm\{v_1,v_2,\dots,v_{i-1}\}$ and $\tilde{M}'_j=\tilde{M}_j\sm\{v_1,v_2,\dots,v_{i-1}\}$. For each $j\in[h]$ the set $\tilde{M}'_j$ is a module in $G'_{j-1}$ by Observation~\ref{obs:subgraph}.

To compute $W(\GVN{i})$, we note that restricting to $V_i\sm\nb[v_i]$ is equivalent to putting $w(v)=0$ for all $v \notin V_i\sm\nb[v_i]$. Thus we can use the algorithm of section~\ref{sec:Gv}, with exactly the same justification, after setting $w(v)=0$ for $v \notin V_i\sm\nb[v_i]$. Of course, in carrying out the algorithm we actually delete the vertices in $V\sm(V_i\sm\nb[v_i])$.
Thus the algorithm approximates $W(\GVN{i})$ for $i\in[n]$, using the algorithm of  Section~\ref{sec:clawoddholefree}, and then combines the estimates using~\eqref{eq:partition}.

\subsection{Approximating $W(G)$ for all graphs in $G\in\cC$}\label{sec:forkfree}

Since we can now approximate $W(G)$ for any prime $G$, we can
use the algorithm of Section~\ref{sec:modcount} to lift this to arbitrary $G$. This
completes the description of our algorithm.

The algorithm will fail if the (claw,\,odd hole)-free algorithm fails on any of the modules $\tilde{M}'_j$ for any prime $G$ and any $\GVN{i}$ $(i\in[h])$. In that case we conclude that $G$ is not (fork,\,odd hole)-free and terminate.

The basis of the algorithm is modular decomposition.
From Section~\ref{sec:modcount} this gives only a negligible overhead to the
algorithm for prime graphs in $\cC$.
From Section~\ref{sec:W(G_v)}, the algorithm for $\Gv$ is modular decomposition,
so this adds a negligible overhead. However, the algorithm for prime $G$
in Section~\ref{sec:partition} requires $n$ applications of the algorithm for $\Gv$.
Thus if $T_{\cA}(n,\epsilon)$ is the time complexity of the subroutine for $\cA$, the
overall time complexity is $O(nT_{\cA}(n,\epsilon n^{-2}))$.
(The error analysis is similar to the one seen earlier in the reduction to permanent approximation, and is given below.)
For $\cA=$ (claw,\,odd hole)-free graphs,
$T_{\cA}(n,\epsilon)=O(n^{12}\log^5(n)\epsilon^{-2})$, so for $\cC=$ (fork,\,odd hole)-free graphs,
$T_{\cC}(n,\epsilon)=O(n^{17}\log^5(n)\epsilon^{-2})$.

\subsubsection{Error Analysis}\label{sec:forkerror}

We can only approximate the total weight for graphs in $\cA$, in our application the class of
(claw,\,odd hole)-free graphs. So we must show that the resulting error in $W(G)$ can be
controlled for all $G\in\cC$, in our application the class of (fork,\,odd hole)-free graphs.

Suppose we approximate to a factor $(1\pm\varepsilon/2n^2)$ for $G\in\cA$.
In step $i$ of the algorithm of Section~\ref{sec:modcount}, we approximate $W(\tilde{M}_i)$,
and contract $\tilde{M}_i$ with this weight. Thus one weight in $G_{i+1}$ has error
$(1\pm\varepsilon/2n^2)$. We do this at most $n$ times, so the error in $W(G)$ becomes at most
$(1\pm\varepsilon/2n)$. We do this $n$ times in the method of Section~\ref{sec:partition},
and add the estimates. However, this does not increase the relative error. Finally,
we apply the algorithm of Section~\ref{sec:modcount} again, so the error is at most
$(1\pm\varepsilon/2n)^n$, which is at most $(1\pm\varepsilon)$ for
$\varepsilon<1$.

\subsection{Approximating $W_\alpha(G)$}\label{sec:fork W_k}
We cannot use the method of Section~\ref{sec:claw W_k} to approximate $W_k(G)$ for arbitrary $0\leq k\leq\alpha(G)$,
because there is no known analogue for fork-free graphs of the real-rootedness
result of~\cite{ChuSey} for claw-free graphs.
However, the method given in Section~\ref{sec:claw W_k} for estimating $W_\alpha(G)$ is valid for any graph class,
not necessarily even hereditary, where we can use arbitrary vertex weights. Therefore, the result of \cite{JSV}
for approximating the permanent can be completely generalised to approximating the total weight of maximum independent sets in (fork,\,odd hole)-free graphs.

\end{document}